\documentclass{article}
\usepackage[T1]{fontenc}

\usepackage{stackengine}
\usepackage{float}
\usepackage[utf8]{inputenc}
\usepackage{fullpage}
\usepackage{amsmath}
\usepackage{amssymb,amsfonts,dsfont}
\usepackage{amsthm}
\usepackage{thm-restate}
\usepackage{dsfont}
\usepackage{physics}
\usepackage{tikz}
\usetikzlibrary{calc,decorations.pathreplacing,fit,shapes.misc}
\usepackage{array}
\usepackage{comment}
\usepackage[backend=biber,style=alphabetic,maxbibnames=99,minalphanames=4,maxalphanames=5]{biblatex}
\usepackage{subcaption}

\usepackage{authblk}

\usepackage{todonotes}
\usepackage{lmodern}
\usetikzlibrary{decorations.pathmorphing}
\usepackage[colorlinks=true,linkcolor=blue,allcolors=blue]{hyperref}
\usepackage[capitalize,nameinlink]{cleveref}
\usepackage[ruled, vlined, linesnumbered]{algorithm2e}
\usepackage{setspace}
\usepackage{caption}

\usepackage{mathtools,bm}

\tikzset{snake it/.style={decorate, decoration=snake}}
\makeatletter
\tikzset{
  fitting node/.style={
    inner sep=0pt,
    fill=none,
    draw=none,
    reset transform,
    fit={(\pgf@pathminx,\pgf@pathminy) (\pgf@pathmaxx,\pgf@pathmaxy)}
  },
  reset transform/.code={\pgftransformreset}
}
\makeatother

\DeclareMathOperator{\BQP}{\mathsf{BQP}}

\DeclareMathOperator{\NP}{\mathsf{NP}}

\DeclareMathOperator{\QMA}{\mathsf{QMA}}

\newcommand{\poly}{\operatorname{poly}}

\newcommand{\id}{\mathrm{I}}

\newcommand{\eps}{\varepsilon}
\newcommand{\expec}{\mathbb{E}}

\newtheorem{theorem}{Theorem}[section]

\newtheorem{lemma}[theorem]{Lemma}

\newtheorem{claim}[theorem]{Claim}

\crefname{assumption}{assumption}{assumptions}

\theoremstyle{definition}

\theoremstyle{remark}
\newtheorem{remark}[theorem]{Remark}

\crefname{property}{Property}{Properties}

\crefname{algocf}{Algorithm}{Algorithms}

\newcommand{\Var}{\mathrm{Var}}

\newcommand{\ketbratwo}[2]{\left| #1 \middle\rangle \middle\langle #2 \right| }

\AtBeginDocument{}

\usepackage[most,breakable]{tcolorbox}
\usepackage{changepage}
\usepackage[final,nopatch=footnote]{microtype}
\newtcolorbox{myframe}[1][]{
  enhanced,
  arc=0pt,
  outer arc=0pt,
  colback=white,
  boxrule=0.8pt,
  #1
}
\newtcolorbox{algbox}{breakable,colback=gray,colframe=gray,standard jigsaw,opacityback=0.15,opacityframe=0.5,boxrule=0.75pt,before upper={\parindent15pt}}

\title{An energy-based uncertainty principle and low-energy state preparation}
\author{Anurag Anshu}
\affil{School of Engineering and Applied Sciences, Harvard University}
\date{\today}
\begin{document}
\captionsetup{width=.9\linewidth}

\maketitle

\begin{abstract}
    Preparing low-energy states of many-body Hamiltonians is a central challenge in quantum computing, quantum complexity, and condensed matter physics. Existing approaches often get trapped in suboptimal states such as high-energy eigenstates or, more generally, low-variance states that resist further energy reduction. In this work, we explore a different perspective: instead of optimizing with respect to a single Hamiltonian, we leverage the fact that many systems admit families of Hamiltonians that share similar low-energy subspaces but differ at higher energies. We show that this redundancy can be turned into an algorithmic resource by establishing an energy-based uncertainty principle, which implies that these Hamiltonians cannot simultaneously admit low-variance states at higher energies. This suggests a simple strategy of alternating energy-lowering steps across such Hamiltonians, which we investigate numerically on several models. We also introduce a sparse variant where the uncertainty principle yields quadratically larger variance at higher energies, leading to more pronounced energy change. Overall, this work suggests a range of open questions at the interface of random matrix theory, local Hamiltonians and low-energy state preparation, aimed at understanding when such approaches are practical and how they can be analyzed rigorously.
\end{abstract}

\section{Introduction}

Approximating the ground energy of a local Hamiltonian is a central problem in quantum many-body physics, quantum complexity theory and a possible candidate for practical quantum advantage \cite{dalzell23book}. It is closely tied with preparing the low energy states themselves, as access to a low energy state directly implies an estimate to the ground energy. However, low energy regime of quantum local Hamiltonians involves highly entangled states, so one is generally restricted to starting with a simple higher energy state (such as a product state) and then improving its energy through a quantum algorithm. 

A challenge with using simple high energy states as a starting point is that such states have very small overlap with the low energy subspace. For example, the energy distribution of any $n$ qubit product state sharply concentrates in the $O(\sqrt{n})$ energy window around the expected energy. This limits significant `cooling' even with non-trivial quantum algorithms such as fixed point amplitude amplification \cite{YLC14}. Variational methods have been proposed to improve the energy \cite{Peruzzo2014VQE, Kandala2017, Aruteetalvariational, Grimsley2019} in practical settings, and they can provably achieve an extensive energy improvement on various families of initial states \cite{AGKS21}. However, as discussed below, these methods usually see no improvement once a low variance state (for example, an eigenstate) of the Hamiltonian is encountered.  It is worth noting that the bottleneck may be fundamental, as complexity theoretic conjectures (such as $\QMA \neq \BQP, \NP \neq \BQP$) prohibit the preparation of low energy states efficiently on a quantum computer. Theoretical results on approximations to ground states have thus been limited to product states or states generated by shallow circuits \cite{GK12, BrandaoH16, BGKK19, GP19, AGKS21, ParekhT21, King2023, ALMPSS25}.  

Our goal in the paper is to find ways to make progress despite getting stuck in a low variance state or a state with negligible mass in the low energy regime. Our starting point is a central insight from quantum Hamiltonian complexity : frustration-free ground states often have multiple non-commuting parent (local) Hamiltonians which agree on the ground state, but seem to have higher degree of non-commutation at higher energies.  For every local Hamiltonian $H= \sum_i h_i$ that is a sum of non-negative hermitian operators (which includes, but is not restricted to, frustration-free models), we define a family of `altered' local Hamiltonians by adding extra penalty terms in the subspace spanned by $h_i$ (see Equation \ref{eq:alteredHloc}). If $H$ has a zero energy ground space, then all the Hamiltonians in the family have the same ground space. Moreover, we show that this family of Hamiltonians satisfies an energy-based uncertainty principle (Theorem \ref{theo:highvar}): states with high energy respect to $H$ have high variance with typical Hamiltonians in the family. 

This gives us a way to avoid getting stuck in a low variance state, by using any energy lowering procedure and then alternating across the local Hamiltonians in the family to maintain a large variance at each step. We formulate the resulting algorithm in Section \ref{sec:var_algorithm} and perform numerical simulations on the AKLT and Heisenberg models. As expected from the above discussion, we find an overall decrease in energy even when we start from eigenstates of the original Hamiltonian (Figure \ref{fig:QMCsim}). Sparse Hamiltonians form an interesting class of Hamiltonians that can be simulated on a quantum computer with appropriate access \cite{Childs2010}, and yet exhibit many features of local Hamiltonians. We construct a sparse version of the `altered' Hamiltonians when $H$ is classical, offering a quadratically larger variance in energy - see Theorem \ref{theo:sparsevar}.

However, further work is needed to better understand the convergence of the algorithm and compare with other known methods to prepare low energy states. Simulations in Section \ref{sec:var_algorithm} have not revealed advantage over prior methods in full generality - likely due to a lack of serious bottlenecks on small system sizes. Indeed, Subsection \ref{subsec:J1J2} does reveal advantage when the ansatz is restricted to be translation invariant (which leads to energy bottlenecks already at $20$ qubits).  At the same time, there are some hints of convergence; Section \ref{subsec:groverdrop} discusses that convergence in sub-exponential time is plausible if certain correlation measure is small or negative. Another missing piece is the mechanism obstructing a fast converging to the low energy regime; which should exist due to fine-grained versions of the aforementioned complexity theoretic conjectures.

\subsection{Related work}

There is a significant body of work understanding the power of variational quantum algorithms \cite{Peruzzo2014VQE,Kandala2017, Aruteetalvariational, Grimsley2019,Cerezo2021VQAReview, AGKS21, HastingsODonnell22, KohYu21, Alvertis2025classicalbenchmarks} for quantum optimization problems and relatedly QAOA \cite{Farhi2014QAOA} for classical optimization problems. The variational algorithms often face the issue of barren plateaus \cite{McClean2018BarrenPlateaus, Cerezo2021VQAReview, Grant2019InitializationStrategy} (which is a significant challenge in the presence of noise \cite{Wang2021NoiseInducedBarrenPlateaus} - we do not consider noise in this paper). A `warm start' is a potential method in mitigating barren plateaus \cite{CGMG26, puig2026warmstartscoldstates} - see Discussion item \ref{discuss:warmstart} for the relevance of warm start in our context.

Alternating minimization across multiple cost functions is a common approach in optimization. In quantum many-body context, White's DMRG method \cite{White92} alternates between various tensors to minimize the energy. While a rigorous proof of the convergence of DMRG does not yet exist, rigorous algorithms inspired by this method are now known \cite{LVV15, ALVV17}.

The use of `noise' for preparing ground states and low energy Gibbs states by dissipation or Gibbs samplers has been a well known recent theme in quantum computing \cite{VWC2009, GilyenS17, PHO21, cubitt2023, DIWB25, FLLC26, STPS26unitaryimaginarytimeevolution, Temme2011, CKBG25, Lin25, RouzeFA25, jiang2024quantummetropolissamplingweak}. Quantum Gibbs samplers and dissipative methods fundamentally rely on the spectral gap of a relevant quantum channel for convergence. In comparison, as discussed in Section \ref{subsec:groverdrop}, the runtime of our approach may be a function of the energy density, possibly scaling exponentially in $\frac{1}{\eps}$ to reach energy below $E_0+\eps n$ (where $E_0$ is the ground energy) if the aforementioned correlation conditions are met. Another point of difference is that there are clear classical analogues of quantum Gibbs sampler and dissipative methods, whereas our approach needs quantum operation even for a classical Hamiltonian due to its reliance on the uncertainty principle.  Inherently quantum operators for a classical problem are also needed in the QAOA \cite{Farhi2014QAOA} and adiabatic algorithm \cite{FGGS20}, although for arguably different reasons.

Reducing energy by imaginary time evolution is another method that is known to be efficiently implementable as long as decay of correlation is maintained \cite{Motta2020}. We currently do not know if decay of correlation can be guaranteed in our recursive steps, so its unclear how to use this method.

\section{Energy-based uncertainty principle}
\label{sec:uncertaintyprinciple}

\subsection{Local Hamiltonian families}

Fix two local Hamiltonians $H_0=\sum_{i=1}^m \Pi_i$, where $\Pi_i$ are projectors supported on a constant number of qubits, and $H= \sum_{i=1}^m h_i$ where for each $i$, $h_i$ is a psd matrices in the subspace projected by $\Pi_i$. We will assume that the non-zero eigenvectors of $h_i$ are constants, which means that $\Omega(1)\Pi_i\preceq h_i\preceq O(1)\Pi_i$. Let $\vec{\phi}={\phi_1, \phi_2, \ldots \phi_m}$ be an array of states, with $\phi_i$ acting on qubits in the support of $\Pi_i$ and $\Pi_i \phi_i = \phi_i \Pi_i= \phi_i$ (this says that $\phi_i$ is a vector in the subspace projected by $\Pi_i$). Define the altered Hamiltonian:
\begin{equation}
\label{eq:alteredHloc}
H_{\vec{\phi}}= H + \Phi, \quad \Phi=\sum_{i=1}^m d_i\phi_i,
\end{equation}
where $d_i=\Tr(\Pi_i)$. Let $D$ be a pair-wise independent distribution over $\vec{\phi}$
and we choose every pair ${\phi_i, \phi_j}$ to form a 2-design. It is known that such $D$ can have $O(nd)$ support size, which leads to $O(nd)$ Hamiltonians. Note that 
\begin{equation}
\label{eq:expectedPhi}
\expec_{\vec{\phi}\sim D} \Phi = \sum_{i=1}^m d_i\frac{\Pi_i}{d_i}= H_0.
\end{equation}
Define the variance of an operator $O$ with respect to a state $\psi$ as the quantity $$\Var_{\psi}(O) := \Tr((O-\Tr(O\psi))^2\psi)= \Tr(O^2\psi) - \Tr(O\psi)^2.$$

The following theorem says that any high energy state of $H$ has high variance with respect to a typical Hamiltonian in Equation \ref{eq:alteredHloc}. 
\begin{theorem}
 \label{theo:highvar}
Suppose $d_i\geq 2$ for all $i$ \footnote{This assumption simply says that there is a `space' to define altered Hamiltonians. We can always reduce to this case by enlarging the Hilbert space per qudit by $1$ and then adding extra penalty terms that prohibit accessing that space.}. For any (possibly mixed) quantum state $\psi$, 
$$\expec_{\vec{\phi} \sim D} \Var_{\psi}(H_{\vec{\phi}}) = \Omega(1) \Tr(H_0\psi) = \Omega(1)\Tr(H\psi).$$
\end{theorem}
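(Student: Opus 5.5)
The plan is to split the random Hamiltonian into its mean plus a zero-mean fluctuation. The variance contributed by the fluctuation will be computed exactly using pairwise independence and the 2-design property. Write $H_{\vec\phi} = A + X$ with $A := H + H_0$ deterministic and
$$X := \Phi - H_0 = \sum_{i=1}^m X_i,\qquad X_i := d_i\phi_i - \Pi_i,$$
where $\expec X_i = 0$ by \cref{eq:expectedPhi}. Expanding $\Var_\psi(A+X)$ and averaging over $\vec\phi\sim D$, the cross terms are $\Tr((AX+XA)\psi) - 2\Tr(A\psi)\Tr(X\psi)$. These are linear in $X$ and so vanish in expectation. This leaves
$$\expec_{\vec\phi}\Var_\psi(H_{\vec\phi}) = \Var_\psi(A) + \expec\Tr(X^2\psi) - \expec\Tr(X\psi)^2 \;\geq\; \expec\Tr(X^2\psi) - \expec\Tr(X\psi)^2 .$$
So it suffices to lower bound the averaged variance of the fluctuation $X$ alone.

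Next I use pairwise independence to kill the off-diagonal terms. For $i\neq j$, $\phi_i$ and $\phi_j$ are independent, so $\expec[X_iX_j] = \expec[X_i]\,\expec[X_j] = 0$ as operators, even when supports overlap, since the expectation of a product of independent operator-valued variables factorizes. Likewise $\expec[\Tr(X_i\psi)\Tr(X_j\psi)]=0$. Hence both quantities reduce to sums of single-term contributions:
$$\expec\Tr(X^2\psi) = \sum_i \Tr(\expec[X_i^2]\psi),\qquad \expec\Tr(X\psi)^2 = \sum_i \expec\Tr(X_i\psi)^2 .$$
For the first sum, $\phi_i^2=\phi_i$, $\Pi_i\phi_i=\phi_i$ and $\expec\phi_i = \Pi_i/d_i$ give $\expec[X_i^2] = d_i\Pi_i - 2\Pi_i + \Pi_i = (d_i-1)\Pi_i$.

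For the second sum I need the second moment of a single $\phi_i$. Since each pair forms a 2-design, each $\phi_i$ individually is a 2-design on the range of $\Pi_i$. Set $\rho_i := \Pi_i\psi_i\Pi_i$, where $\psi_i$ is the reduced state on the support of $\Pi_i$, and let $t_i := \Tr(\Pi_i\psi)\le 1$. The Haar formula gives
$$\expec\Tr(\phi_i\rho_i)^2 = \frac{t_i^2+\Tr\rho_i^2}{d_i(d_i+1)}.$$
Therefore
$$\expec\Tr(X_i\psi)^2 = \expec\bigl(d_i\Tr(\phi_i\rho_i) - t_i\bigr)^2 = \frac{d_i\Tr\rho_i^2 - t_i^2}{d_i+1} \le \frac{(d_i-1)t_i^2}{d_i+1} \le \frac{(d_i-1)t_i}{d_i+1},$$
using $\Tr\rho_i^2\le t_i^2$ and $t_i\le 1$.

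Combining the two sums term by term gives
$$\expec\Var_\psi(H_{\vec\phi}) \;\ge\; \sum_i (d_i-1)t_i\Bigl(1-\frac{1}{d_i+1}\Bigr) = \sum_i\frac{d_i(d_i-1)}{d_i+1}\,t_i \;\ge\; \frac{2}{3}\Tr(H_0\psi),$$
where the last step uses $d_i\ge 2$. Finally, $\Omega(1)\Pi_i\preceq h_i\preceq O(1)\Pi_i$ gives $\Tr(H_0\psi)=\Theta(\Tr(H\psi))$, which is the second equality in \cref{theo:highvar}.

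The step I expect to need the most care is the subtraction of $\expec\Tr(X_i\psi)^2$. Here one must use the genuine 2-design second moment rather than a crude bound, because $\expec\Tr(X_i\psi)^2$ can be of the same order as $(d_i-1)t_i$ for pure $\rho_i$. It is exactly the hypothesis $d_i\ge 2$ that keeps the factor $d_i/(d_i+1)$ gap positive; at $d_i=1$ the projector $\phi_i$ is forced to equal $\Pi_i$ and the fluctuation vanishes. I will also check that the factorization of $\expec[X_iX_j]$ for overlapping supports is justified purely by independence of the random operators, without any commutation assumption.
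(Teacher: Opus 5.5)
Your proof is correct and follows essentially the same route as the paper. You drop $\Var_\psi(H+H_0)$, kill the cross terms by pairwise independence, and use the 2-design second moment together with $\Tr\rho_i^2\le t_i^2\le t_i$ to reach $\sum_i \frac{d_i(d_i-1)}{d_i+1}\Tr(\Pi_i\psi)$. Your centering $X_i=d_i\phi_i-\Pi_i$ packages the paper's cancellation of the $\Tr(H_0^2\psi)-\Tr(H_0\psi)^2$ terms more cleanly, and your constant $2/3$ (versus the paper's $1/2$) is the exact value of $\frac{d(d-1)}{d+1}$ at $d=2$.

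One small slip in your closing commentary: the factor that vanishes at $d_i=1$ is $d_i-1$, not $d_i/(d_i+1)$.
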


\begin{proof}
Consider,
$$\Tr(\psi H_{\vec{\phi}})^2 =  \Tr(\psi H)^2 + \Tr(\psi \Phi)^2 + 2\Tr(\psi H)\Tr(\psi \Phi),$$
and 
$$\Tr(\psi H^2_{\vec{\phi}}) =  \Tr(\psi H^2) + \Tr(\psi \Phi^2) + \Tr(\psi H \Phi) + \Tr(\psi \Phi H).$$
Thus,
\begin{align}
\label{eq:varexpressionHphi}
\Tr(\psi H^2_{\vec{\phi}}) - \Tr(\psi H_{\vec{\phi}})^2& = \Tr(\psi H^2) - \Tr(\psi H)^2 + \Tr(\psi \Phi^2) - \Tr(\psi \Phi)^2\nonumber\\
& + \Tr(\psi H \Phi) + \Tr(\psi \Phi H) -  2\Tr(\psi H)\Tr(\psi \Phi).
\end{align}
Now, taking expectation with $\vec{\phi}$ and using Equation \ref{eq:expectedPhi}, we obtain 
\begin{align}
\label{eq:phivarlowerbound}
&\expec_{\vec{\phi}}\left(\Tr(\psi H^2_{\vec{\phi}}) - \Tr(\psi H_{\vec{\phi}})^2\right) = \nonumber\\
&\Tr(\psi H^2) - \Tr(\psi H)^2 + \expec_{\vec{\phi}}\left(\Tr(\psi \Phi^2) - \Tr(\psi \Phi)^2\right) + \Tr(\psi HH_0) + \Tr(\psi H_0 H) -  2\Tr(\psi H)\Tr(\psi H_0)\nonumber\\
& = \Tr(\psi (H+H_0)^2) - \Tr(\psi (H+H_0))^2 + \expec_{\vec{\phi}}\left(\Tr(\psi \Phi^2) - \Tr(\psi \Phi)^2\right) - \Tr(\psi H_0^2) + \Tr(\psi H_0)^2\nonumber\\
&\geq \expec_{\vec{\phi}}\left(\Tr(\psi \Phi^2) - \Tr(\psi \Phi)^2\right) - \Tr(\psi H_0^2) + \Tr(\psi H_0)^2.
\end{align}
Now, consider the term involving $\Phi$:
\begin{align*}
&\expec_{\vec{\phi}}\left(\Tr(\psi \Phi^2) - \Tr(\psi \Phi)^2\right)  = \expec_{\vec{\phi}}\left(\sum_{i,j=1}^md_id_j\Tr(\phi_i\phi_j\psi) - \sum_{i,j=1}^md_id_j\Tr(\phi_i\psi)\Tr(\phi_j\psi)\right)\\
& =\expec_{\vec{\phi}}\left(\sum_{i=1}^md^2_i\Tr(\phi_i\psi) + \sum_{i\neq j}d_id_j\Tr(\phi_i\phi_j\psi) - \sum_{i=1}^md^2_i\Tr(\phi_i\psi)^2+\sum_{i\neq j}^md_id_j\Tr(\phi_i\psi)\Tr(\phi_j\psi)\right) \\
&=\sum_{i=1}^md_i\Tr(\Pi_i\psi) + \sum_{i\neq j}\Tr(\Pi_i\Pi_j\psi) - \sum_{i\neq j}^m\Tr(\Pi_i\psi)\Tr(\Pi_j\psi)-\expec_{\vec{\phi}}\left(\sum_{i=1}^md^2_i\Tr(\phi_i\psi)^2\right)\\
&=\sum_{i=1}^m(d_i-1)\Tr(\Pi_i\psi) +\sum_{i}^m\Tr(\Pi_i\psi)^2+ \Tr(H_0^2\psi) - \Tr(H_0\psi)^2-\expec_{\vec{\phi}}\left(\sum_{i=1}^md^2_i\Tr(\phi_i\psi)^2\right),
\end{align*}
where in the last equation, we ordered the terms to collect the contributions from $H_0, H_0^2$ separately. Combining back in Equation \ref{eq:phivarlowerbound}, we obtain 
$$\expec_{\vec{\phi}}\left(\Tr(\psi H^2_{\vec{\phi}}) - \Tr(\psi H_{\vec{\phi}})^2\right)\geq \sum_{i=1}^m(d_i-1)\Tr(\Pi_i\psi) +\sum_{i}^m\Tr(\Pi_i\psi)^2-\expec_{\vec{\phi}}\left(\sum_{i=1}^md^2_i\Tr(\phi_i\psi)^2\right).$$

Next, we will use the following facts about 2 design:
$$\expec_{\vec{\phi} \sim D} \phi_i \otimes \phi_i = \frac{\Pi_i\otimes \Pi_i(\id + Swap) \Pi_i\otimes \Pi_i}{d_i(d_i+1)},$$
which allows us to set 
$$\expec_{\vec{\phi}} d_i^2\Tr(\phi_i\psi)^2 = \frac{d_i}{d_i+1}\left(\Tr(\Pi_i\psi)^2+\Tr((\Pi_i\psi\Pi_i\otimes \Pi_i\psi\Pi_i)Swap)\right)= \frac{d_i}{d_i+1}\left(\Tr(\Pi_i\psi_i)^2+\Tr((\Pi_i\psi_i\Pi_i)^2)\right),$$
where $\psi_i=\Tr_{-i}(\psi)$. Plugging in above, we find
\begin{align*}
&\expec_{\vec{\phi}}\left(\Tr(\psi H^2_{\vec{\phi}}) - \Tr(\psi H_{\vec{\phi}})^2\right)\\
&\geq \sum_{i=1}^m(d_i-1)\Tr(\Pi_i\psi_i) +\sum_{i}^m\frac{\Tr(\Pi_i\psi_i)^2}{d_i+1}-\sum_{i=1}^m \frac{d_i}{d_i+1}\Tr((\Pi_i\psi_i\Pi_i)^2)\\
&=\sum_{i=1}^m(d_i-1)\left(\Tr\left(\Pi_i\psi_i\right) - \frac{d_i\Tr((\Pi_i\psi_i\Pi_i)^2)}{d^2_i-1} +\frac{1}{d^2_i-1}\Tr(\Pi_i \psi_i)^2\right)\\
&=\sum_{i=1}^m (d_i-1)\left(\frac{d_i}{d_i+1}\Tr\left(\Pi_i\psi_i\right) + \frac{1}{d_i+1}\Tr\left(\Pi_i\psi_i\right) - \frac{d_i\Tr((\Pi_i\psi_i\Pi_i)^2)}{d^2_i-1} +\frac{1}{d^2_i-1}\Tr(\Pi_i \psi_i)^2\right)\\
&\geq\sum_{i=1}^m (d_i-1)\left(\frac{d_i}{d_i+1}\Tr\left(\Pi_i\psi_i\right) + \frac{1}{d_i+1}\Tr\left(\Pi_i\psi_i\right)^2 - \frac{d_i\Tr((\Pi_i\psi_i\Pi_i)^2)}{d^2_i-1} +\frac{1}{d^2_i-1}\Tr(\Pi_i \psi_i)^2\right)\\
& = \sum_{i=1}^m(d_i-1)\left(\frac{d_i}{d_i+1}\Tr\left(\Pi_i\psi_i\right) + \frac{d_i}{d^2_i-1}\Tr\left(\Pi_i\psi_i\right)^2 - \frac{d_i\Tr((\Pi_i\psi_i\Pi_i)^2)}{d^2_i-1}\right)\\
&\geq \sum_{i=1}^m\frac{(d_i-1)d_i}{(d_i+1)}\Tr\left(\Pi_i\psi_i\right),
\end{align*}
where we used $\Tr((\Pi_i\psi_i\Pi_i)^2)\leq \Tr(\Pi_i\psi_i)^2$ (as $\frac{\Pi_i\psi_i\Pi_i}{\Tr(\Pi_i\psi_i)}$ is a quantum state and hence has purity at most 1). Since $\frac{(d_i-1)d_i}{(d_i+1)}\geq \frac{1}{2}$ for $d_i\geq 2$, we find that 

$$\expec_{\vec{\phi}}\left(\Tr(\psi H^2_{\vec{\phi}}) - \Tr(\psi H_{\vec{\phi}})^2\right) \geq  \frac{1}{2}\sum_i\Tr(\Pi_i\psi).$$ This completes the proof.
\end{proof}

We note that the right-hand side in Theorem \ref{theo:highvar} cannot be improved from linear in energy to super-linear in energy, since product states have $\Theta(n)$ variance as well as energy with respect to quantum Hamiltonians. To improve the lower bound, more general sparse Hamiltonians can be considered - see Appendix \ref{subsec:sparseuncertainty} for details.

\subsection{Formulation relevant to variational circuits}

Consider a local Hamiltonian $H=\sum_i h_i$ with $\|h_i\|\leq 1$, and a basis of local Pauli operators $\{P_1,P_2,\ldots P_M\}$ that can be divided into constant number of groups $Q_1, Q_2, \ldots Q_g$ such that all the Paulis in a group $Q_{\ell}$ have disjoint support. This can be achieved whenever the Paulis are defined on lattices or on low degree graphs.

\begin{algorithm}[h]
\caption{Variational update algorithm}
\label{alg:variational-update}

\KwIn{
A quantum state $\ket{\psi}$ preparable by a circuit $U$ acting on $\ket{0}^{\otimes n}$;
a variational parameter $\theta\in(-1,1)$;
an error parameter $\eps>0$.
}
\KwOut{
A quantum state $\left(\bigotimes_{\ell=1}^ge^{\iota G_{\ell}\theta}\right)\ket{\psi}$, where
$G_{\ell}=\sum_{k\in Q_{\ell}} (\mu_k\pm \eps)P_k$,
$\mu_k=\text{sign}\bigl(\iota\bra{\psi}[P_k,H]\ket{\psi}\bigr)$ and the unitary that prepares $\left(\bigotimes_{\ell=1}^ge^{\iota G_{\ell}\theta}\right)\ket{\psi}$.
}

\For{$k\gets 1$ \KwTo $M$}{
Estimate $\mu_k$ (up to error $\eps$) using $O(\log n/\eps^2)$ copies of $\ket{\psi}$\;
}

For all $\ell \in [g]$, construct $G_{\ell}=\sum_{k\in Q_{\ell}} (\mu_k\pm \eps)\,P_k$ using the estimated $\mu_k$'s\;

Output the quantum state
$\left(\bigotimes_{\ell=1}^ge^{\iota G_{\ell}\theta}\right)\ket{\psi}$, and a classical description of the unitary $V = \left(\bigotimes_{\ell=1}^ge^{\iota G_{\ell}\theta}\right)U$.\;
\end{algorithm}

To explain the choice of $\mu_i$, we consider the case where $\eps=0$ (for $\eps>0$, the energy estimates below get worse by $\pm O(\eps M)$):
\begin{align*}
\bra{\psi}\left(\bigotimes_{\ell=1}^ge^{-\iota G_{\ell}\theta}\right)H \left(\bigotimes_{\ell=1}^ge^{\iota G_{\ell}\theta}\right)\ket{\psi} &= \bra{\psi}H \ket{\psi} - \iota \theta \sum_{\ell=1}^g\bra{\psi} [G_{\ell}, H] \ket{\psi} + O(\theta^2 e^{\theta}M)\\
& = \bra{\psi}H \ket{\psi} - \iota \theta \sum_k \mu_k\bra{\psi} [P_k, H] \ket{\psi} + O(\theta^2 e^{\theta}M).
\end{align*}
Above, the first line bounds the second order estimates using standard tools (see for example, \cite{Grimsley2019, AGKS21}). Now, from the choice of $\mu_k$'s,
$$\bra{\psi}\left(\bigotimes_{\ell=1}^ge^{-\iota G_{\ell}\theta}\right)H \left(\bigotimes_{\ell=1}^ge^{\iota G_{\ell}\theta}\right)\ket{\psi} = \bra{\psi}H \ket{\psi} - \theta \sum_{k} |\bra{\psi} (\iota[P_k, H]) \ket{\psi}| + O(\theta^2 e^{\theta}M).$$
Choosing $\theta = c \frac{\sum_{k} |\bra{\psi} (\iota[P_k, H]) \ket{\psi}|}{M}$, for an appropriate constant $c$ that makes $\theta<1$ ($c$ would depend on the degree of the anticommutation graph of $\{P_1,P_2,\ldots P_M\}$, which we assume to be a constant), we find the energy to be
$$\bra{\psi}H \ket{\psi} -  \Omega(1)\frac{(\sum_{k} |\bra{\psi} (\iota[P_k, H]) \ket{\psi}|)^2}{M}= \bra{\psi}H \ket{\psi} -  \Omega(1)\sum_{k} (\bra{\psi} (\iota[P_k, H]) \ket{\psi})^2,$$
where the last line uses Cauchy-Schwartz. Note that $[P_k, H]$ is only supported on qubits in the neighbourhood of $P_k$. Using cyclicity of trace, we can write
\begin{align}
\label{eq:localvardrop}
\bra{\psi}\left(\bigotimes_{\ell=1}^ge^{-\iota G_{\ell}\theta}\right)H \left(\bigotimes_{\ell=1}^ge^{\iota G_{\ell}\theta}\right)\ket{\psi} = \bra{\psi}H \ket{\psi}  -  \Omega(1)\sum_{k} (\Tr{P_k\iota[\psi, H]})^2. 
\end{align}
 Here, the last expression can be viewed as
a "local variance" of $\ket{\psi}$ with respect to $H$, since the quantity is near zero for any state who local marginals nearly match the marginals of an eigenstate of $H$ (see \cite{lin2024nashstatesversuseigenstates} for more general class of such states).  

Recall the Hamiltonians $H=\sum_{i=1}^m h_i$  and $\Pi_i$ as the subspace that supports $h_i$. Let $S_i$ be the set of qubits on which $\Pi_i$ is supported. The following lemma shows that the quantity is large under some assumptions on the state:
\begin{lemma}
\label{lem:localvar}
Assume that for each $i$, all tensor product Pauli operators in $S_i$ are included in $P_k$'s. Given a state $\psi$, it holds that  
$$\expec_{\vec{\phi}}\sum_{k=1}^M \left(\Tr{P_k\iota[\psi, H_{\vec{\phi}}]}\right)^2 = \frac{\Omega(1)}{m}\left(\sum_{i=1}^m\Tr(\Pi_i\psi)\left\|\frac{\Pi_i\psi_{S_i}\Pi_i}{\Tr(\Pi_i\psi_{S_i})} - \frac{\Pi_i}{d_i}\right\|_2\right)^2.$$
\end{lemma}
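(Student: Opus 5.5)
The plan is to pass from the global Pauli sum to per-term local quantities using Parseval, average over $\phi_i$ one term at a time using pairwise independence, evaluate the resulting fluctuation with the 2-design moments from the proof of Theorem \ref{theo:highvar}, and recombine the terms with Cauchy--Schwarz. Write $X_{\vec{\phi}}=\iota[\psi,H_{\vec{\phi}}]$, which is Hermitian.

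\textbf{Step 1 (localize with Parseval).} By assumption every Pauli supported in $S_i$ occurs among the $P_k$'s. Each $P_k$ lies inside only $O(1)$ of the sets $S_i$, by bounded degree. Hence
$$\sum_k \left(\Tr(P_kX_{\vec{\phi}})\right)^2 \geq \Omega(1)\sum_{i=1}^m \sum_{P\subseteq S_i}\left(\Tr(PX_{\vec{\phi}})\right)^2 = \Omega(1)\sum_{i=1}^m 2^{|S_i|}\left\|\Tr_{-S_i}X_{\vec{\phi}}\right\|_2^2 .$$
The equality is Parseval for the Pauli basis on $S_i$. Since $|S_i|=O(1)$, it remains to lower bound $\expec_{\vec{\phi}}\|\Tr_{-S_i}X_{\vec{\phi}}\|_2^2$ for each $i$ separately.

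\textbf{Step 2 (isolate the $\phi_i$-dependence).} Decompose
$$\Tr_{-S_i}X_{\vec{\phi}} = d_i\,\iota[\psi_{S_i},\phi_i] + R_i,$$
where $R_i$ collects the contributions of $H$ and of $d_j\phi_j$ for $j\neq i$. The operator $R_i$ is affine in each $\phi_j$ separately and does not involve $\phi_i$. Set $B_i=d_i\iota[\psi_{S_i},\phi_i]$, so that $\expec B_i=\iota[\psi_{S_i},\Pi_i]$. Expand $\|B_i+R_i\|_2^2$ and use $\|R_i+\expec B_i\|_2^2\geq 0$. This gives $\expec\|B_i+R_i\|_2^2 \geq \expec\|B_i-\expec B_i\|_2^2$, provided the cross term $\expec\langle B_i-\expec B_i, R_i\rangle$ vanishes. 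It does vanish: that term is a sum of pieces each involving $\phi_i$ and at most one other $\phi_j$, and pairwise independence factorizes each such piece. Consequently
$$\expec_{\vec{\phi}}\|\Tr_{-S_i}X_{\vec{\phi}}\|_2^2 \geq d_i^2\left(\expec_{\phi_i}\|[\psi_{S_i},\phi_i]\|_2^2-\|[\psi_{S_i},\Pi_i/d_i]\|_2^2\right).$$

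\textbf{Step 3 (2-design computation; main obstacle).} For a pure $\phi$ we have $\|[\rho,\phi]\|_2^2 = 2\Tr(\rho^2\phi)-2\Tr(\rho\phi\rho\phi)$. I would average this with the second-moment formula $\expec\,\phi_i\otimes\phi_i=\Pi_i^{\otimes 2}(\id+Swap)\Pi_i^{\otimes 2}/(d_i(d_i+1))$, as in the proof of Theorem \ref{theo:highvar}. Write $\rho'=\Pi_i\psi_{S_i}\Pi_i$, $t=\Tr(\rho')$ and $\sigma=\rho'/t$. Then $\Tr(\rho\phi\rho\phi)$ averages to $(t^2+\Tr(\rho'^2))/(d_i(d_i+1))$, while $\Tr(\rho^2\phi)$ averages to $\Tr(\Pi_i\psi_{S_i}^2)/d_i \geq \Tr(\rho'^2)/d_i$. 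After subtracting the mean term $\|[\psi_{S_i},\Pi_i]\|_2^2/d_i^2$, the block $\rho'$ should contribute an amount proportional to
$$\frac{\Tr(\rho'^2)-t^2/d_i}{d_i+1} = \frac{t^2}{d_i+1}\left\|\sigma-\frac{\Pi_i}{d_i}\right\|_2^2 .$$
The off-diagonal blocks $\Pi_i\psi_{S_i}(\id-\Pi_i)$ appear both in $\Tr(\Pi_i\psi^2_{S_i})$ and in $\|[\psi_{S_i},\Pi_i]\|_2^2$. The bookkeeping I must verify is that these contributions cancel or come out nonnegative, so that
$$\expec\|B_i-\expec B_i\|_2^2\geq \Omega(1)\,t^2\left\|\sigma-\Pi_i/d_i\right\|_2^2 .$$
This is the step I expect to be delicate: the signs of the leftover off-block terms and of the $1/d_i$ corrections have to be tracked exactly. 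If they do not cancel cleanly, my fallback is to bound them by the diagonal-block term, which is harmless since $d_i=O(1)$.

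\textbf{Step 4 (recombine).} Set $a_i=\Tr(\Pi_i\psi)\,\|\sigma_i-\Pi_i/d_i\|_2$, where $\sigma_i$ is the normalized block $\sigma$ of Step 3 for term $i$. Steps 1--3 give $\expec_{\vec{\phi}}\sum_k(\Tr(P_kX_{\vec{\phi}}))^2\geq\Omega(1)\sum_i a_i^2$. Cauchy--Schwarz, $\sum_i a_i^2\geq\frac{1}{m}\left(\sum_i a_i\right)^2$, then yields the bound claimed in Lemma \ref{lem:localvar}.
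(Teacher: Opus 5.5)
Your proof is correct and uses the same ingredients as the paper: pairwise independence to isolate the $\phi_i$-fluctuations, Pauli Parseval on $S_i$, the second-moment formula for $\phi_i$, and Cauchy--Schwarz at the end. The difference is the order in which you apply them.

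The paper averages first, globally. Pairwise independence turns the expectation into the complete square $\sum_k\bigl(\Tr(P_k\iota[\psi,H+H_0])\bigr)^2$ plus $\sum_i\sum_k\bigl(d_i^2\expec_{\vec{\phi}}\Tr(P_k\iota[\psi,\phi_i])^2-\Tr(P_k\iota[\psi,\Pi_i])^2\bigr)$. It drops the square and observes that every $(i,k)$ summand is nonnegative by convexity. Only then does it keep, for each $i$, the Paulis inside $S_i$ and apply Parseval. Because the sum over $i$ comes out of the variance decomposition itself, no Pauli is ever shared among several terms. So that route needs no bounded-overlap hypothesis and loses no degree-dependent constant.

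Your localize-first Step 1 does need each non-identity $P_k$ to lie in $O(1)$ of the sets $S_i$. This holds in the paper's bounded-degree setting, and the identity contributes $0$ since $X_{\vec{\phi}}$ is traceless. In exchange, your operator-valued split in Step 2 is a clean way to see the cross terms vanish.

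The bookkeeping you flagged in Step 3 closes exactly, with no fallback needed. With $\rho=\psi_{S_i}$,
\[
\expec\|B_i-\expec B_i\|_2^2 = 2(d_i-1)\,\|\Pi_i\rho(\id-\Pi_i)\|_2^2+\frac{2d_i^2}{d_i+1}\,t^2\Bigl\|\sigma-\frac{\Pi_i}{d_i}\Bigr\|_2^2 ,
\]
since after subtracting the mean term, $\Tr(\Pi_i\rho^2)$ survives with net coefficient $2(d_i-1)\geq 0$. This is the identity the paper reaches. Dropping the off-block term gives your bound with constant $\frac{2d_i^2}{d_i+1}\geq 1$.
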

Thus, expected `local variance' is large as long as the high energy state does not look maximally mixed in most $\Pi_i$ subspaces. A proof is given in Appendix \ref{append:localvariance}.

\section{Alternating minimization method and simulations}
\label{sec:var_algorithm}

In this section, we formulate the alternating approach for low energy state preparation - see Algorithm \ref{alg:alt-min-variational}. Our formulation pre-samples $r$ randomly chosen altered Hamiltonian and repeatedly tries to reduce the energy of the current state with one of these $r$ altered Hamiltonian. 
\begin{algorithm}[h]
\caption{Alternating-minimization method (uses Algorithm ~\ref{alg:variational-update})}
\label{alg:alt-min-variational}

\KwIn{
An initial state $\ket{\psi_0}=U\ket{0}^{\otimes n}$ specified by a unitary $U$ (typically depth $O(1)$);
initial energy $E=\bra{\psi_0}H\ket{\psi_0}$;
number of steps $L\in\mathbb{N}$; number of altered Hamiltonians $r$.}
\KwOut{
A circuit description of depth $O(L)$ for a state with possibly lower energy than $\psi_0$.
}
Sample $r$ Hamiltonians $H^{(1)}, \ldots H^{(r)}$ iid from the altered family (e.g., $H_{\vec{\phi}}$ for local Hamiltonians, or $H_{T,f}$ for sparse family) \; $U_{\mathrm{current}} \gets U$\;
$i \gets 0$\;
\While{$i < L$}{
        Choose $j\in \{1,2,\ldots r\}$ at random\;
        Run Algorithm~\ref{alg:variational-update} on $\ket{\psi_i}=U_{current}\ket{0}^{\otimes n}$ for the Hamiltonian $H^{(j)}$,
        with $\theta$ chosen appropriately (or searched over). Let $\ket{\psi_{i+1}}$ be the output state.
        Set $U_{\mathrm{current}}$ to be the circuit output from Algorithm~\ref{alg:variational-update}\;
    $i \gets i+1$\;
    }
\end{algorithm}

Algorithm \ref{alg:alt-min-variational} is simulated on various models below, with $r=8$ altered Hamiltonians. In the simulations, we are limited by small system sizes. Thus our goal is to test two important aspects: a) does the alternating method help get past the eigenstate bottleneck? b) does energy minimization with respect to multiple altered Hamiltonians still lead to an overall energy drop with respect to the original Hamiltonian? Simulations indicate positively on both aspects. However, as mentiond in the introduction, the simulations have not revealed any system where Algorithm \ref{alg:alt-min-variational} outperforms existing variational methods.  

We consider the following models:
\begin{itemize}
\item The $1D$ AKLT Hamiltonian \cite{AKLT87}, which is a canonical frustration-free model (Figure \ref{fig:AKLTsim}). 
\item The Heisenberg model with magnetic field $H=\sum_{(i,j)\in E}(X_i X_j + Y_i Y_j + Z_i Z_j) + \sum_i Z_i$ on a square lattice, considered in \cite{Kandala2017} (Figure \ref{fig:QMCsim}). 
\item A degree 6 random graph on $16$ vertices, and the quantum max-cut model $H=\sum_{(i,j) \in E}(X_i X_j + Y_i Y_j + Z_i Z_j)$ (Figure \ref{fig:QMCsim}). 
\item A degree 4 random graph on $16$ vertices and the Max-cut model $H=\sum_{(i,j) \in E}(\id + Z_i Z_j)/2$ (Figure \ref{fig:MC16qub}).
\end{itemize}

\begin{figure}[h!]
\centering
\begin{subfigure}[b]{0.49\textwidth}
  \centering
  \includegraphics[width=\linewidth]{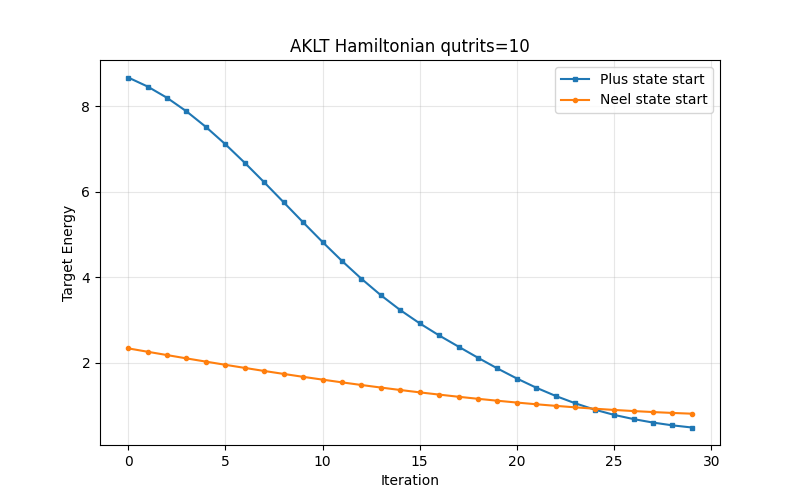}
  \caption{}
\end{subfigure}\hfill
\begin{subfigure}[b]{0.49\textwidth}
  \centering
  \includegraphics[width=\linewidth]{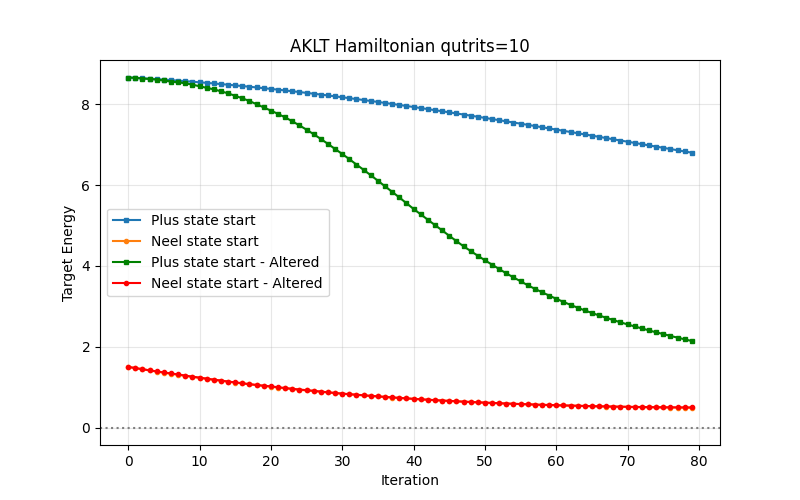}
  \caption{}
\end{subfigure}

\caption{Figure (a) simulates the standard Algorithm \ref{alg:variational-update} for $L=30$ times on 10 qutrit AKLT model, with all 80 non-identity two-qutrit generators in $G$. The algorithm performs very well and reaches the low energy space on both the `plus state' ($\left(\frac{\ket{-1}+\ket{0}+\ket{1}}{\sqrt{3}}\right)^{\otimes 10}$) and the `neel state' ($\ket{1}\otimes\ket{-1}\otimes \ket{1}\ldots$). In Figure (b), we run Algorithm \ref{alg:variational-update} for $L=80$ times with only 15 non-identity spin-1 generators in $G$, again on the plus and neel states. The neel state performance remains good, however the plus state's energy decays slowly. However, the alternating algorithm \ref{alg:alt-min-variational} perfoms much better on the plus state. It performs as well as the standard algorithm on the Neel state.}
\label{fig:AKLTsim}
\end{figure}

\begin{figure}[h!]
\centering

\begin{subfigure}[b]{0.49\textwidth}
  \centering
  \includegraphics[width=\linewidth]{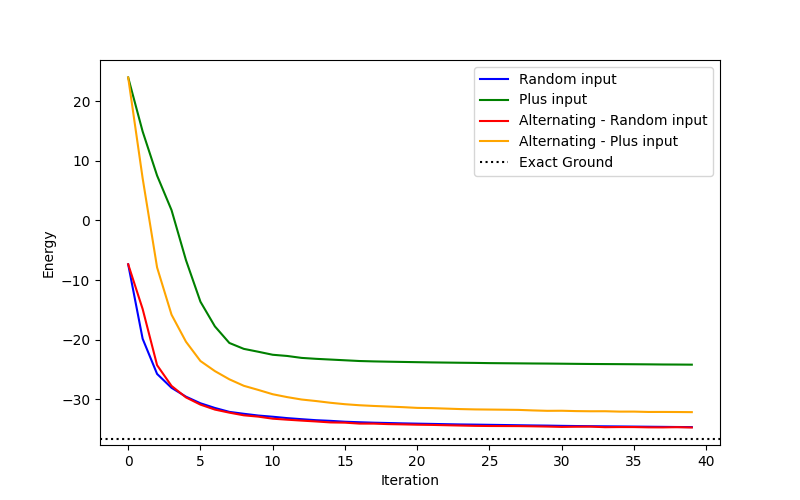}
  \caption{}
\end{subfigure}
\begin{subfigure}[b]{0.49\textwidth}
  \centering
  \includegraphics[width=\linewidth]{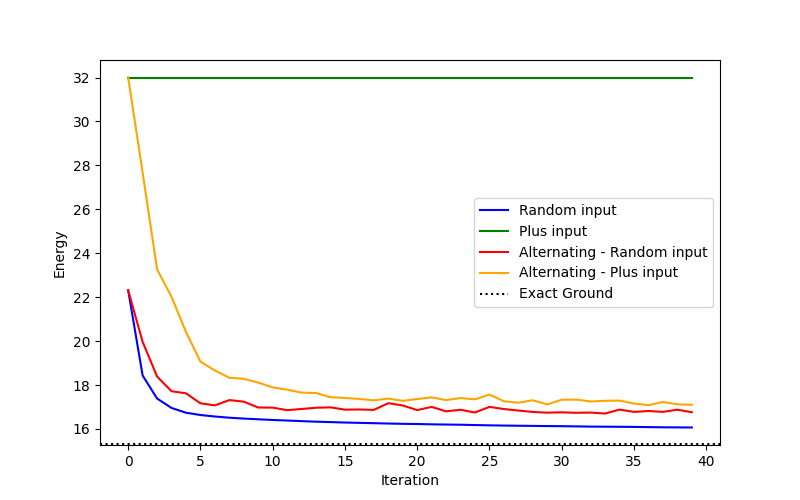}
  \caption{}
\end{subfigure}

\caption{The figures compare Algorithm \ref{alg:variational-update} for $L=20$ steps and Algorithm \ref{alg:alt-min-variational} (for same $L=20$ steps), on plus state $\ket{+}^{\otimes 16}$ and a random product state, on 16 qubits, for two Hamiltonians. Figure (a) depicts the Heisenberg model with magnetic field on a $4 \times 4$ square lattice and Figure (b) depicts the quantum max-cut Hamiltonian on 16 qubits (the same randomly generated graph is used for all the runs in Figure (b)). Note in Figure (b) that the alternating algorithm performs much better than standard for the plus state, which is an eigenstate of the Hamiltonian, but performs slightly worse on the random state.}
\label{fig:QMCsim}
\end{figure}

\begin{figure}
    \centering
    \includegraphics[width=0.65\linewidth]{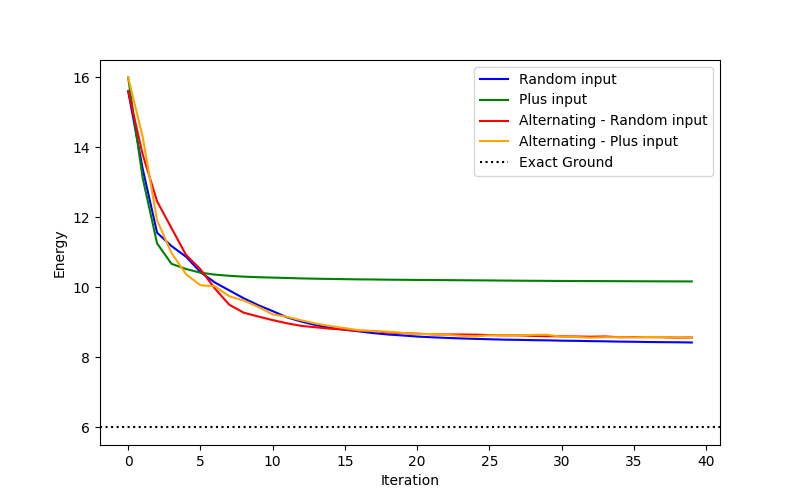}
    \caption{The figure compares Algorithms \ref{alg:variational-update} and \ref{alg:alt-min-variational}, on plus state and random product state, for the Max-cut Hamiltonian on a degree $4$ graph with $16$ vertices.}
    \label{fig:MC16qub}
\end{figure}

\subsection{A (near) translation invariant variational family}
\label{subsec:J1J2}

A challenge with Algorithms \ref{alg:variational-update} and \ref{alg:alt-min-variational} is that they need too many measurements per iteration, which may be expensive even on modern quantum devices. In this section, we explore the possibility of using translation invariant families of these algorithms, and find a numerical advantage with altered Hamiltonians.  

We fix a 2D $n \times m$ periodic lattice and let the edges connect horizontally ($E_H$), vertically ($E_V$) and diagonally (call these edges $E_{D_1}, E_{D_2}$) - see Figure \ref{fig:J1J2altered} (a). The Hamiltonian is
\begin{equation}
\label{eq:J1J2H}
H=J_1\sum_{(i,j)\in E_H\cup E_V}(X_i\otimes X_j + Y_i\otimes Y_j + Z_i \otimes Z_j) + J_2\sum_{(i,j)\in E_{D_1}\cup E_{D_2}}(X_i\otimes X_j + Y_i\otimes Y_j + Z_i \otimes Z_j).
\end{equation}

We consider a variant of Algorithm \ref{alg:variational-update}, where the generator $$G = \sum_{(i,j)\in E_H}\left(\sum_P \mu_{P, H} P\right) + \sum_{(i,j)\in E_V}\left(\sum_P \mu_{P, V} P\right)+\sum_{(i,j)\in E_{D_1}}\left(\sum_P \mu_{P, D_1} P\right) + \sum_{(i,j)\in E_{D_2}}\left(\sum_P \mu_{P, D_2} P\right)$$ is now translation invariant and we optimize only 60 parameter $\mu$'s to minimize the energy $e^{\iota G}\ket{\psi}$. For simulation, we actually use a symmetric trotter formula with $K=4$ steps of trotterization. For a translation invariant $\ket{\psi}$, this breaks translation invariance, but only slightly.  

We run two types of algorithms: first is repeated variational update using the original Hamiltonian; second is switch between variational and altered updates every 10 steps. The altered Hamiltonians are obtained from Equation \ref{eq:J1J2H} by replacing,
on each of the four edge classes, the exchange term
$X\otimes X + Y\otimes Y + Z\otimes Z = 4\Pi_t - 3I$, where $\Pi_t$ is the
projector onto the triplet subspace, by
$4\left(\frac{\Pi_t}{2} + \frac{3}{2}\ket{\phi}\!\bra{\phi}\right) - 3I,$
where $\ket{\phi}$ is a Haar-random triplet vector drawn independently for each
edge class. The same operator is repeated on every edge of its class, so each
altered Hamiltonian remains translation invariant, and it is unbiased:
$\mathbb{E}[H_{\mathrm{alt}}] = H$. Restricting to (near) translation invariant ansatz creates strong energy barriers already at small system sizes, which the altered `bursts' help overcome quickly - see Figure \ref{fig:J1J2altered}(b). 

\begin{figure}
\centering

\begin{subfigure}[b]{0.29\textwidth}
  \centering
  \includegraphics[width=\linewidth]{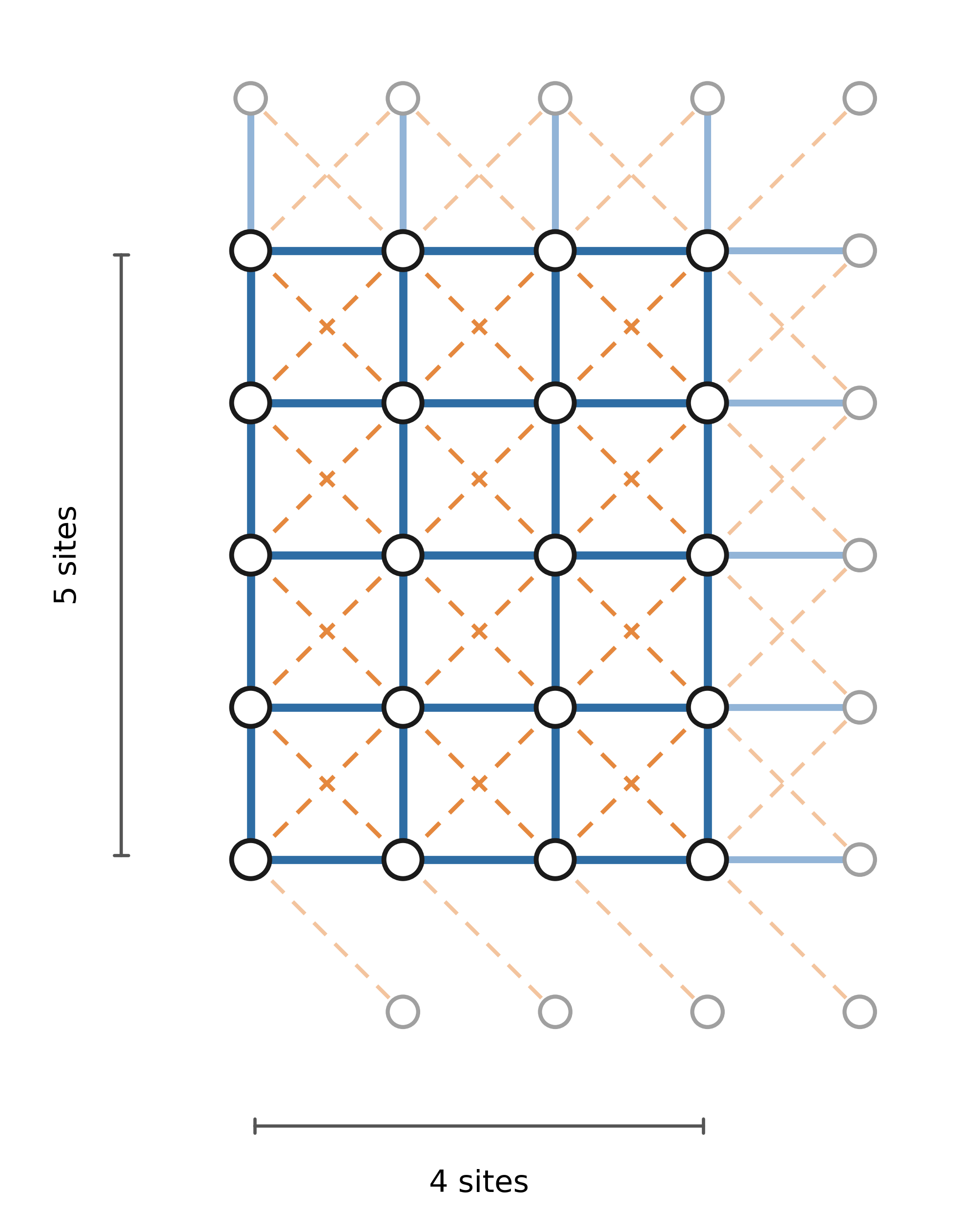}
  \caption{}
\end{subfigure}
\begin{subfigure}[b]{0.59\textwidth}
  \centering
  \includegraphics[width=\linewidth]{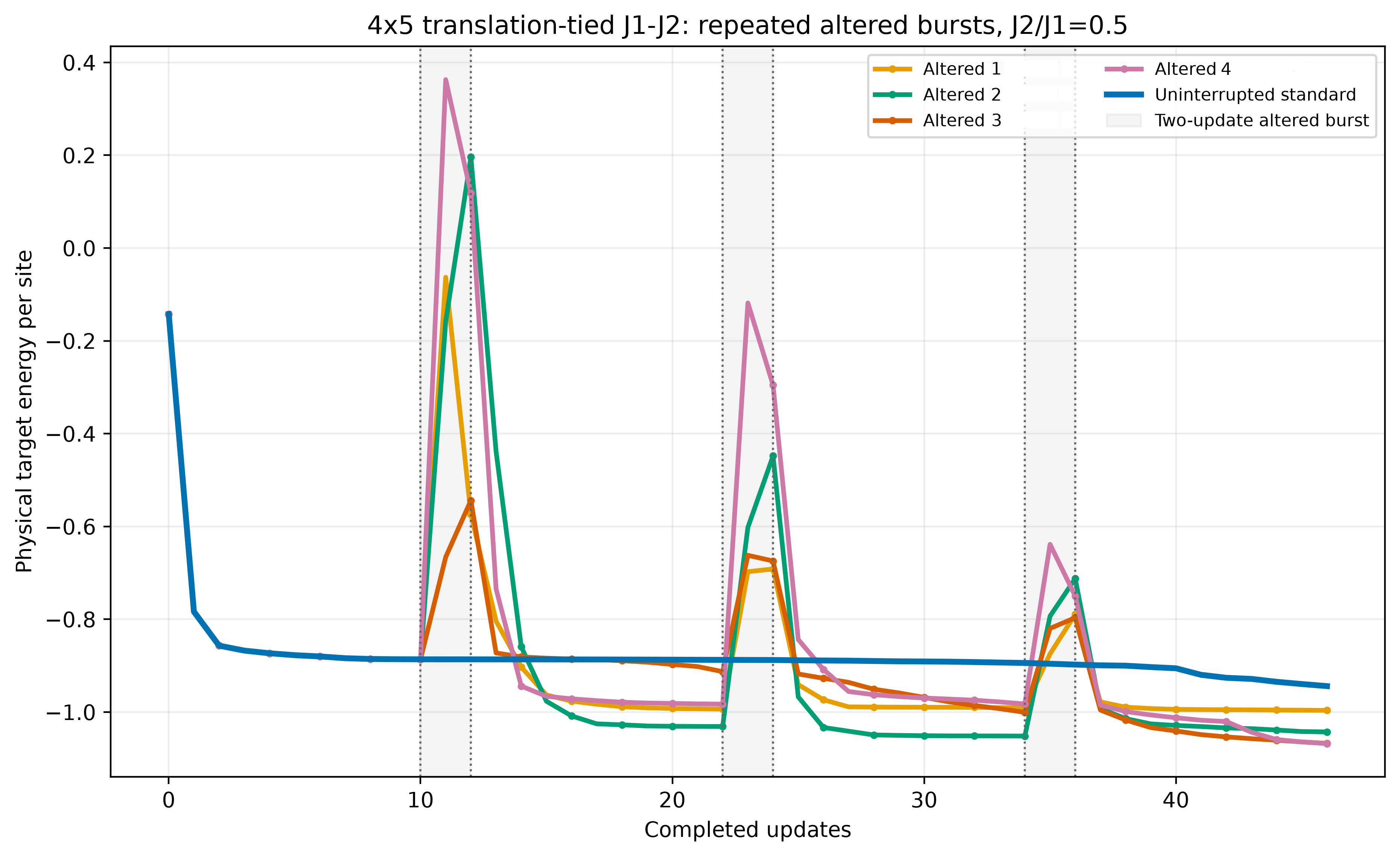}
  \caption{}
\end{subfigure}

    \caption{We run (near) translation invariant version of the variational method with and without assistance of altered Hamiltonians. Figure (a): The Hamiltonian is as in Equation \ref{eq:J1J2H} on $4\times 5$ lattice (with $J_1=1, J_2=0.5$), and the ground energy per site (total ground energy divided by $20$) is $\sim -2.088$ (obtained by the Lanczos method). Figure (b): A random initial product state is chosen as follows: sample two iid single qubit states $\ket{\psi}, \ket{\phi}$, and then place $\ket{\psi}$ on every qubit in the odd column and $\ket{\phi}$ on every qubit in the even column.  The variational approach without using altered Hamiltonian reaches energy per site about $-0.95$ at $46$ updates (blue curve). The curves in other colors correspond to choosing an altered Hamiltonian, running the standard variational updates for 10 iterations followed by altered update for 2 iterations, for 3 cycles (fixed altered Hamiltonian for the run of a fixed color) and then 10 standard updates. We see that the energy per site drops to about $-1.068$ in the best case.}
    \label{fig:J1J2altered}
\end{figure}

\subsection{Adapting Grover-type energy improvement}
\label{subsec:groverdrop}

A natural way to reduce the energy of any state with some energy spread is to use a Grover-type approach. The circuits in this context are more complex than that in Algorithm \ref{alg:variational-update}, but advantage is that the energy improvement relies on a variance like quantity rather than the weaker local-variance like quantity from Equation \ref{eq:localvardrop}. More precisely, given a state $\ket{\psi} = U\ket{0}^{\otimes n}$ that can be prepared by a quantum circuit $U$, let the average energy be $E_{av}=\bra{\psi}H\ket{\psi}$, and let $c$ be such that $\|\Pi^H_{\leq E_{av} - c}\ket{\psi}\|^2 \geq 0.01$. Here, $\Pi^H_{\leq E_{av} - c}$ is the projector onto the eigenspace of $H$ with eigenvalues $\leq E_{av}-c$.  A large $c$ corresponds to large energy spread in typical scenarios\footnote{\label{footnote:spreadvar} We can think of $c$ being similar to $\sqrt{\Var_{H}(\psi)}$; however this is rigorous only if the spread of $\ket{\psi}$ around $E_{av}$ is fairly balanced. In fact, it is possible to come up with states for which $c \approx 0$ despite $\sqrt{\Var_H(\psi)}$ being large, but we do not expect such atypical states to arise in the context of altered Hamiltonians due to significant amount of randomness.}. Using fixed-point amplitude amplification \cite{YLC14}, we can prepare $\frac{\Pi^H_{\leq E_{av} - c}\ket{\psi}}{\|\Pi^H_{\leq E_{av} - c}\ket{\psi}\|}$ using $O(\log\frac{1}{\delta})$ uses of the unitaries $U, U^\dagger, W_{\theta}$ ,   with fidelity $1-\delta$, where $W_{\theta}=\Pi^H_{\leq E_{av}-c} + e^{\iota \theta}(\id - \Pi^H_{\leq E_{av}-c})$ for $\theta \in (0,2\pi)$. Note that $W_{\theta}$ can be well approximated using quantum phase estimation \cite{kitaev1995quantummeasurementsabelianstabilizer}. This improves the energy with respect to $H$ by additive factor of $c$, while increasing the circuit size from $\text{size}(U)$ to $\Theta(\log\frac{1}{\delta})\cdot (\text{size}(U)+\poly(n))$.

Now consider a scheme similar to Algorithm \ref{alg:alt-min-variational}, with variational steps replaced by the above energy reduction scheme for a randomly chosen altered Hamiltonian $H_{alt}$. We will choose a normalization such that $\expec H_{alt} = H$, which is already true for Equation \ref{eq:alteredW} and can be chosen in Equation \ref{eq:alteredHloc} by fixing $H=H_0$ and then dividing by $\frac{1}{2}$. For $T$ iterations starting on a product state, we may set $\delta = \frac{1}{T^2}$ in the fixed point amplitude amplification algorithm, leading to a final state with circuit of size $\Theta(\poly(\log T, n))^T$. 

It is unclear what is the final expected energy as a function of the number of iterations $T$. But we can get some bound depending on a correlation measure defined below. Let the state at the end of iteration $i$ be $\ketbra{\psi_i}$. The expected final energy with respect to $H$ can be written using telescopic sum as 
$$\expec\Tr(\psi_T H) = \Tr(\psi_0 H) + \sum_{i=1}^T\expec(\Tr(\psi_i H) - \Tr(\psi_{i-1}H)).$$
Adding and subtracting the $i$th choice of altered Hamiltonian $H^{(i)}_{alt}$, we obtain
$$\expec\Tr(\psi_T H) = \Tr(\psi_0 H) + \sum_{i=1}^T\expec\left(\Tr(\psi_i H^{(i)}_{alt}) - \Tr(\psi_{i-1}H^{(i)}_{alt})\right)+\sum_{i=1}^T\expec\left(\Tr(\psi_i (H-H^{(i)}_{alt})) - \Tr(\psi_{i-1}(H-H^{(i)}_{alt}))\right).$$
Since $\psi_{i-1}$ is independent of $H^{(i)}_{alt}$, which is chosen using fresh randomness, the last term $\expec\Tr(\psi_{i-1}(H-H^{(i)}_{alt}))$ vanishes using $\expec H^{(i)}_{alt}=H$. Furthermore, Theorem \ref{theo:highvar} suggests (see Footnote \ref{footnote:spreadvar} and discussion item \ref{item:strongervariance}) that in each iteration $i$, $\expec(\Tr(\psi_i H^{(i)}_{alt}) - \Tr(\psi_{i-1}H^{(i)}_{alt})) = -\Theta(\sqrt{\Tr(\psi_i H)})$ (Theorem \ref{theo:sparsevar} respectively suggests that $\expec(\Tr(\psi_i H^{(i)}_{alt}) - \Tr(\psi_{i-1}H^{(i)}_{alt}))= -\Theta(\Tr(\psi_i H)) $). Thus, the energy is expected to drop towards the ground energy if the correlation measure
\begin{equation}
\label{eq:correlation}
\frac{1}{T}\sum_{i=1}^T\expec(\Tr(\psi_i (H-H^{(i)}_{alt})) = \Tr\left(H\left(\frac{1}{T}\sum_{i=1}^T\expec\psi_i\right)\right)-\frac{1}{T}\sum_{i=1}^T\expec\Tr(\psi_i H^{(i)}_{alt})    
\end{equation}
is small or negative. In such a regime, the energy of the state - averaged over all iterations - is not much larger than the energy seen by altered Hamiltonians.   For illustration, suppose this correlation scales as $O(T^{-\alpha}\|H\|) = O(T^{-\alpha}n)$ with $\alpha>0$. Then the largest $T$ for which the algorithm stays above energy $\eps n$ would be determined by the relation $c_1T^{1-\alpha}n - c_2T\sqrt{\eps n}\geq -\|H\| = -c_3 n$, with $c_1, c_2, c_3$ as constants, for the local altered Hamiltonians in Equation \ref{eq:alteredHloc}. This relation is violated when $T= \Theta((n/\eps)^{\frac{1}{2\alpha}})$, implying that the algorithm would reach energy below $\eps n$ in $e^{\Theta((n/\eps)^{\frac{1}{2\alpha}})\log n}$ expected time (subexponential for $\alpha > \frac{1}{2}$). For the sparse altered Hamiltonians (Equation \ref{eq:alteredW}) when $H$ is classical, the condition on $T$ changes to $c_1T^{1-\alpha}n - c_2T\eps n\geq  -c_3 n$, which is violated when $T= \Theta((1/\eps)^{\frac{1}{\alpha}})$. This would give a $\poly(n)^{\Theta((1/\eps)^{\frac{1}{\alpha}})}$ time algorithm to reach energy below $\eps n$.

\section{Discussion}
\label{sec:conclusion}

We have presented a heuristic algorithm aimed at overcoming the barrier of sub-optimal minima (particularly the states with low variance) in low-energy state preparation. Our algorithm is based on an uncertainty principle (Theorems \ref{theo:highvar}, \ref{theo:sparsevar}) for local Hamiltonian families that agree in low energy spectrum, but disagree at higher energies. Several questions are unanswered in this work.

\begin{enumerate}
\item \textbf{Convergence:} The convergence guarantees on the algorithm are largely missing. It is clear that variance lower bound would not suffice for a convergence guarantee and understanding correlation measures such as in Equation \ref{eq:correlation} might be needed. To illustrate the limitations of variance bounds, consider the energy dependent Hamiltonian $H_Q = H + \Tr(H\psi)Q$, where $Q$ is a random local Hamiltonian. This family also exhibits an energy-dependent variance. However, evolving under this Hamiltonian family would be equivalent to random shallow-circuit perturbation when the state gets stuck in a local minima. Such random perturbations would not drive the state towards low energy in a large Hilbert space.

It may be noted that while there are known circuit lower bounds \cite{ABN23, ACKK25} for various spin systems, it is not clear if they limit the convergence of variational Algorithm \ref{alg:alt-min-variational} (which constructs a circuit of depth $O(\sqrt{n})$), since they either apply to low depth circuits or fermionic systems such as the SYK model for which the current formulation does not apply.

\item \textbf{From `expected variance' to stronger measures:}\label{item:strongervariance} The statements in Theorems \ref{theo:highvar} and \ref{theo:sparsevar} concern $\expec_{H_{alt}}\Var_{\psi}(H_{alt})$, for both the families of altered Hamiltonians. However, we would ideally want to show that with high probability,  the variance $\Var_{\psi}(H_{alt})$ is large. This would require calculations involving higher moments.   

Similarly, for understanding convergence, it would be crucial to understand `quantile energies', for example the largest $c$ for which $\|\Pi^{H_{alt}}_{\leq E_{av}-c}\ket{\psi}\|^2\geq 0.01$ as in subsection \ref{subsec:groverdrop}, with $E_{av} =\bra{\psi}H_{alt}\ket{\psi}$. It is plausible that $c$ is large enough in expectation over the randomness in altered Hamiltonians, since a fixed $\psi$ should have a fairly spread energy distribution. However, getting rigorous bounds will require random matrix calculations that go beyond basic moment estimates. 

\item \textbf{Decoupling of energy subspaces:} A different angle at convergence is to understand the interplay between the eigenspaces of $H_{\vec{\phi}}$ and the eigenspaces of $H$. As a concrete set-up, consider $H=H_0=\sum_i \Pi_i$ and recall that $\expec_{\vec{\phi}}H_{\vec{\phi}}/2 = H$ (Equations \ref{eq:alteredHloc} and \ref{eq:expectedPhi}). For some coarse graining parameter $\delta = \frac{1}{\poly(n)}$ and energies $E,F$ consider the subspaces $\Pi^{H_{\vec{\phi}}/2}_{[E-\delta, E+\delta]}$ (eigenspace of $H_{\vec{\phi}}/2$ with energy in the range $[E-\delta, E+\delta]$) and $\Pi^H_{\geq F}$ (eigenspace of $H$ with energy larger or equal to $F$). Is it true that 
\begin{equation}
\label{eq:decouplingq}
\Tr\left(\Pi^H_{\geq F}\expec_{\vec{\phi}}\frac{\Pi^{H_{\vec{\phi}}/2}_{[E-\delta, E+\delta]}}{d_{\vec{\phi}}}\right) < 0.49,
\end{equation}
where $d_{\vec{\phi}}=\Tr(\Pi^{H_{\vec{\phi}}/2}_{[E-\delta, E+\delta]})$, for example when $F=E+\sqrt{E}/20$? In other words, does the `averaged' low energy subspace of altered Hamiltonians have low overlap with higher energy subspace of $H$?

Below, we give a non-rigorous sketch of the utility of Equation \ref{eq:decouplingq}. Given a state $\ket{\psi}$ with extensive energy $E+c$ with respect to $H$, where $c$ represents the deviation in energy w.r.t. altered Hamiltonians as suggested in Theorem \ref{theo:highvar}; let us set $c= \sqrt{E}/10$ for example. Since $\expec_{\vec{\phi}}\bra{\psi}(H_{\vec{\phi}}/2)\ket{\psi} = E+c$. a randomly sampled altered Hamiltonian $H_{\vec{\phi}}/2$ would have $\bra{\psi}(H_{\vec{\phi}}/2)\ket{\psi} \leq E+c+\frac{1}{\poly(n)}$ with probability close to $\frac{1}{2}$. Let $D$ be the distribution over $\vec{\phi}$ conditioned on $\bra{\psi}(H_{\vec{\phi}}/2)\ket{\psi} \leq E+c+\frac{1}{\poly(n)}$, which can be obtained by measuring energy with respect to $\ket{\psi}$ and then rejection sampling. Then, Equation \ref{eq:decouplingq} would imply that $\Tr\left(\Pi^H_{\geq E+\sqrt{E}/20}\expec_{\vec{\phi}\sim D}\frac{\Pi^{H_{\vec{\phi}}/2}_{[E-\delta, E+\delta]}}{d_{\vec{\phi}}}\right) < 0.99$. If we could unitarily  create from $\ket{\psi}$ a state with marginal $\expec_{\vec{\phi}\sim D}\frac{\Pi^{H_{\vec{\phi}}/2}_{[E-\delta, E+\delta]}}{d_{\vec{\phi}}}$, then amplitude amplification would project this state to one in $\Pi^H_{<E+\sqrt{E}/20}$, effectively reducing the energy with respect to $H$. Creating such a state from $\ket{\psi}$ might be possible by using amplitude amplification to prepare $\Pi^{H_{\vec{\phi}}/2}_{[E-\delta, E+\delta]}\ket{\psi}$ and then assuming Eigenstate thermalization hypothesis \cite{Srednicki94} for $H_{\vec{\phi}}$ to employ the expander mixing procedures developed in \cite{AHHN26}.

\item \textbf{Highly frustrated systems:} Simulations in Figures \ref{fig:QMCsim} and \ref{fig:MC16qub} suggest that the algorithm still moves towards the low energy in frustrated systems. However, the method would likely not succeed in extremely frustrated systems such as the SYK model. A key question is to find ways to reduce the frustration in such systems, without changing the ground (or low) energy subspace by much. One well known approach is to use circuit-to-Hamiltonian mappings:   consider the circuit that checks if the energy is below some $a$ or larger than some $b$, and then apply the Feynman-Kitaev mapping \cite{KitaevShenVyalyi2002, aharonov2002quantumnpsurvey}. However, this may not suffice for our purposes as it does not preserve the energy gap $b-a$.

\item \textbf{Cleaner local variance bounds using warm start:} \label{discuss:warmstart} The lower bound in Lemma \ref{lem:localvar} is trivial with high probability if the state is a Haar-random state, since that forces the local marginals to be maximally mixed. However, Haar-random states have nearly the same energy as that of a maximally mixed state (with high probability), which we can always avoid with a warm start. For example, we can start with the lowest-energy efficiently computable product state. Can the condition that energy is extensively below that of the maximally mixed state help in improving the form of the lower bound in Lemma \ref{lem:localvar}?
\end{enumerate}

\subsection{Acknowledgment}

I thank Alex Dalzell, David Gosset, Aram Harrow, Yunchao Liu, Angus Lowe and Harry Zhou for very insightful discussions on the topic, Lillian Zhong for help with proofreading an earlier version of the paper and Yalei Guo for significant help with coding. ChatGPT provided an earlier version of the proof of Lemma \ref{lem:localvar} and ChatGPT 5.6 Sol suggested using the Hamiltonian in Equation \ref{eq:J1J2H}, as well as using altered Hamiltonians for bursts in Section \ref{subsec:J1J2}. ChatGPT and Gemini were used in coding (python codes were verified by the author) and the verification of moment calculations. The GPU codes in Section \ref{subsec:J1J2} were verified by Claude Fable 5. I acknowledge support through the NSF Award Nos. 2238836, 2430375 and QCIS-FF: Quantum Computing \& Information Science Faculty Fellow at Harvard University (NSF 2013303).

\printbibliography

\appendix

\section{Lower bounds on the local variance}
\label{append:localvariance}

Here we provide the proof of Lemma \ref{lem:localvar}.
 
\begin{proof}
Consider
\begin{align*}
&\expec_{\vec{\phi}}\sum_{k=1}^M \left(\Tr{P_k\iota[\psi, H_{\vec{\phi}}]}\right)^2 = \expec_{\vec{\phi}}\sum_{k=1}^M \left(\Tr{P_k\iota[\psi, H]} + \sum_{i=1}^md_i\Tr{P_k\iota[\psi, \phi_i]}\right)^2\\
& = \sum_{k=1}^M \left(\Tr{P_k\iota[\psi, H]}\right)^2+2\sum_{k=1}^M\left(\Tr{P_k\iota[\psi, H]}\right)\left(\expec_{\vec{\phi}}\sum_{i=1}^md_i\Tr{P_k\iota[\psi, \phi_i]}\right)
\\
&+\expec_{\vec{\phi}}\sum_{k=1}^M \left(\sum_{i=1}^md_i\Tr{P_k\iota[\psi, \phi_i]}\right)^2\\
&= \sum_{k=1}^M \left(\Tr{P_k\iota[\psi, H]}\right)^2+2\sum_{k=1}^M\left(\Tr{P_k\iota[\psi, H]}\right)\left(\Tr{P_k\iota[\psi, H_0]}\right)
\\
&+\sum_{k=1}^M \left(\expec_{\vec{\phi}}\sum_{i\neq j}d_id_j\Tr{P_k\iota[\psi, \phi_i]}\Tr{P_k\iota[\psi, \phi_j]} +\expec_{\vec{\phi}}\sum_{i=1}^md^2_i\Tr{P_k\iota[\psi, \phi_i]}^2\right)\\
& = \sum_{k=1}^M \left(\Tr{P_k\iota[\psi, H]}\right)^2+2\sum_{k=1}^M\left(\Tr{P_k\iota[\psi, H]}\right)\left(\Tr{P_k\iota[\psi, H_0]}\right)
\\
&+\sum_{k=1}^M \left(\sum_{i\neq j}\Tr{P_k\iota[\psi, \Pi_i]}\Tr{P_k\iota[\psi, \Pi_j]} +\expec_{\vec{\phi}}\sum_{i=1}^md^2_i\Tr{P_k\iota[\psi, \phi_i]}^2\right)\\
& = \sum_{k=1}^M \left(\Tr{P_k\iota[\psi, H]}\right)^2+2\sum_{k=1}^M\left(\Tr{P_k\iota[\psi, H]}\right)\left(\Tr{P_k\iota[\psi, H_0]}\right) +\sum_{k=1}^M \left(\Tr{P_k\iota[\psi, H_0]}\right)^2
\\
&+\sum_{k=1}^M \left(\expec_{\vec{\phi}}\sum_{i=1}^md^2_i\Tr{P_k\iota[\psi, \phi_i]}^2-\sum_{i=1}^m\Tr{P_k\iota[\psi, \Pi_i]}^2\right)\\
& \geq \sum_{i=1}^m\sum_{k=1}^M \left(d^2_i\expec_{\vec{\phi}}\Tr{P_k\iota[\psi, \phi_i]}^2-\Tr{P_k\iota[\psi, \Pi_i]}^2\right).
\end{align*}
Observe that by the convexity of the square function, $d^2_i\expec_{\vec{\phi}}\Tr{P_k\iota[\psi, \phi_i]}^2\geq \Tr{P_k\iota[\psi, \Pi_i]}^2$. Thus, for each $i$, we can restrict to the Paulis $P_k$ which are in the support $S_i$ of $\Pi_i$ and get a lower bound. Since $\sum_k \Tr(AP_k)^2 = \Omega(2^{|S_i|})\Tr(A_{S_i}^2) = \Omega(1)\Tr(A_{S_i}^2)$, we conclude the lower bound  
\begin{align*}
&\expec_{\vec{\phi}}\sum_{k=1}^m \left(\Tr{P_k\iota[\psi, H_{\vec{\phi}}]}\right)^2\geq \Omega(1)\sum_{i=1}^m \left(d^2_i\expec_{\vec{\phi}}\Tr{(\iota[\psi_{S_i}, \phi_i])^2}-\Tr{(\iota[\psi_{S_i}, \Pi_i])^2}\right).
\end{align*}
Now, 
\begin{align*}
&d_i^2\expec_{\vec{\phi}}\Tr{(\iota[\psi_{S_i}, \phi_i])^2}- \Tr{(\iota[\psi_{S_i}, \Pi_i])^2} = d_i^2\expec_{\vec{\phi}}\Tr{([\psi_{S_i}, \phi_i]\cdot [\phi_i, \psi_{S_i}])} - \Tr{([\psi_{S_i}, \Pi_i]\cdot [\Pi_i,\psi_{S_i}])}\\
&= 2d_i^2\cdot\expec_{\vec{\phi}}\left(\Tr(\phi_i \psi^2_{S_i}) - \Tr(\phi_i \psi_{S_i})^2  \right) - 2\left(\Tr(\Pi_i\psi^2_{S_i})-\Tr((\Pi_i\psi_{S_i}\Pi_i)^2)\right)\\
&=2\left(d_i\Tr(\Pi_i \psi^2_{S_i}) - d_i\frac{\Tr(\Pi_i \psi_{S_i})^2+\Tr((\Pi_i \psi_{S_i}\Pi_i)^2)}{(d_i+1)} -\Tr(\Pi_i\psi^2_{S_i})+\Tr((\Pi_i\psi_{S_i}\Pi_i)^2) \right)\\
& = 2\left((d_i-1)\Tr(\Pi_i \psi^2_{S_i}) + \frac{\Tr((\Pi_i \psi_{S_i}\Pi_i)^2)-d_i\Tr(\Pi_i \psi_{S_i})^2}{(d_i+1)}\right)\\
&=2(d_i-1)\left(\Tr(\Pi_i \psi^2_{S_i}) + \frac{\Tr((\Pi_i \psi_{S_i}\Pi_i)^2)-\frac{1}{d_i}\Tr(\Pi_i \psi_{S_i})^2}{(d^2_i-1)}+\frac{\frac{1}{d_i}\Tr(\Pi_i \psi_{S_i})^2-d_i\Tr(\Pi_i \psi_{S_i})^2}{(d^2_i-1)}\right)\\
&=2(d_i-1)\left(\Tr(\Pi_i \psi^2_{S_i}) - \frac{1}{d_i}\Tr(\Pi_i \psi_{S_i})^2 + \frac{\Tr((\Pi_i \psi_{S_i}\Pi_i)^2)-\frac{1}{d_i}\Tr(\Pi_i \psi_{S_i})^2}{(d^2_i-1)}\right)
\end{align*}
Since $\Tr((\Pi_i \psi_{S_i}\Pi_i)^2) \geq \frac{1}{d_i}\Tr(\Pi_i \psi_{S_i})^2$ and $$\Tr(\Pi_i \psi^2_{S_i})- \frac{1}{d_i}\Tr(\Pi_i \psi_{S_i})^2 = \|\Pi_i \psi_{S_i}(\id-\Pi_i)\|_2^2 + \Tr((\Pi_i \psi_{S_i}\Pi_i)^2)-\frac{1}{d_i}\Tr(\Pi_i \psi_{S_i})^2,$$ the quantity is manifestly positive if $\Pi_i\psi_{S_i}\Pi_i$ is not proportional to the identity operator in $\Pi_i$. Overall, setting $d_i=O(1)$, we conclude
$$\expec_{\vec{\phi}}\sum_{k=1}^M \left(\Tr{P_k\iota[\psi, H_{\vec{\phi}}]}\right)^2 \geq \Omega(1)\sum_{i=1}^m\Tr(\Pi_i\psi)^2\left\|\frac{\Pi_i\psi_{S_i}\Pi_i}{\Tr(\Pi_i\psi_{S_i})} - \frac{\Pi_i}{d_i}\right\|_2^2.$$
Using the convexity of square, this concludes the proof.
\end{proof}

\section{Energy-dependent uncertainty principle with sparse Hamiltonians}
\label{subsec:sparseuncertainty}

A well known fact in Hamiltonian simulation is that sparse Hamiltonians (which generalize local Hamiltonians) can be efficiently simulated on a quantum computer \cite{Childs2010}. An advantage of considering the more general sparse Hamiltonians is that one expects larger variance at higher energies (even for product states). Unfortunately, we do not know how to write this more general family of sparse altered Hamiltonians in the `local representation' of a local Hamiltonian, given the constraint that the zero-energy eigenspace must be preserved. We settle for a definition that is sparse in the eigenbasis of the Hamiltonian. This at least gives a sparse altered family in the computational basis for classical local Hamiltonians.

Consider a Hamiltonian $H$ with eigen-decomposition $H= \sum_{\alpha=1}^{2^n} E_\alpha \ketbra{\xi_{\alpha}}$ (not necessarily in sorted order). For a classical Hamiltonian $H$, we can take $\ket{\xi_{\alpha}}$ to represent the computational basis state corresponding to $\alpha$. Define the altered Hamiltonians  
\begin{equation}
\label{eq:alteredW}
H_{T, g} = G^\dagger G; \quad G = \sum_{\alpha, \beta=1}^{2^n}T_{\alpha, \beta}g_{\alpha,\beta}\ketbratwo{\xi_\alpha}{\xi_\beta},
\end{equation}
with $T_{\alpha, \beta}$ being an indicator variable,  $g_{\alpha, \beta}$ being gaussian with mean $0$ and standard deviation $\sqrt{\frac{E_\beta}{t_{\beta}}}$, and where $t_{\beta} = |\{\alpha: T_{\alpha, \beta =1}\}|$ . With this choice, we get 
\begin{align}
\label{eq:expectedG}
\expec(H_{T,g}) &= \sum_{\beta,\beta'}\expec \left(\sum_\alpha T_{\alpha,\beta}T_{\alpha,\beta'}g_{\alpha,\beta}g_{\alpha,\beta'} \ketbratwo{\xi_{\beta'}}{\xi_{\beta}} \right)\nonumber\\
&= \sum_{\beta} \left(\sum_\alpha T_{\alpha,\beta}\frac{E_\beta}{t_\beta} \ketbratwo{\xi_{\beta}}{\xi_{\beta}} \right)= H.
\end{align}
where we used that $g_{\alpha,\beta}$ has zero mean to set off-diagonal terms to $0$ and used $\expec(g^2_{\alpha,\beta}) = \frac{E_\beta}{t_\beta}$. Another way of viewing Equation \ref{eq:alteredW} is that 
\begin{equation}
    \label{eq:alteredGdifferent}
    H_{T,g} = \sqrt{H}(L^\dagger L)\sqrt{H},
\end{equation}
where $L$ is a $2^n \times 2^n$ random sparse matrix with $L_{\alpha,\beta} = T_{\alpha,\beta} g'_{\alpha,\beta}$ and $g'_{\alpha,\beta}$ being iid gausian with mean $0$ and standard deviation $\sqrt{\frac{1}{t_\beta}}$. Thus, $H_{T,g}$ can just be viewed as a rescaled version of a standard psd random matrix; scaling done by the Hamiltonian itself.

A basic property of the above definition is that any zero energy eigenstate of $H$ is also a zero energy eigenstate of $H_{T,g}$. 
\begin{claim}
    \label{clm:commongs}
Let $\ket{\xi_\alpha}$ be such that $H\ket{\xi_\alpha}=0$ (if it exists). Then $H_{T,g}\ket{\xi_\alpha}=0$.   
\end{claim}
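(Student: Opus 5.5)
The plan is to compute $G\ket{\xi_\alpha}$ directly from the definition in Equation \ref{eq:alteredW} and show that it vanishes. This gives $H_{T,g}\ket{\xi_\alpha}=G^\dagger G\ket{\xi_\alpha}=0$ immediately.

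First I will use the orthonormality of the eigenbasis $\{\ket{\xi_\beta}\}$ to write
$$G\ket{\xi_\alpha}=\sum_{\gamma}T_{\gamma,\alpha}g_{\gamma,\alpha}\ket{\xi_\gamma}.$$
So only the column of $G$ indexed by $\alpha$ matters.

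Second, by hypothesis $E_\alpha=0$, so every $g_{\gamma,\alpha}$ is a Gaussian with mean $0$ and standard deviation $\sqrt{E_\alpha/t_\alpha}=0$. I will interpret such a degenerate Gaussian as the constant $0$, so $g_{\gamma,\alpha}=0$ for every $\gamma$ and the column vanishes identically. One caveat needs care: if $t_\alpha=0$, the ratio $E_\alpha/t_\alpha$ is $0/0$. In that case every $T_{\gamma,\alpha}=0$, so the column is zero regardless of how $g_{\gamma,\alpha}$ is defined. Either way $G\ket{\xi_\alpha}=0$.

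As a cleaner cross-check that avoids the degenerate Gaussian interpretation, I will use the equivalent form in Equation \ref{eq:alteredGdifferent}, namely $H_{T,g}=\sqrt{H}(L^\dagger L)\sqrt{H}$. Since $H\ket{\xi_\alpha}=0$ and $H$ is psd, we have $\sqrt{H}\ket{\xi_\alpha}=\sqrt{E_\alpha}\ket{\xi_\alpha}=0$, which gives the claim directly. I do not expect a real obstacle. The only subtle point is the degenerate-variance convention and the $t_\alpha=0$ case, and the second formulation handles both cleanly.
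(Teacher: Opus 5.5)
Your proposal is correct and matches the paper's proof. The paper likewise writes $G\ket{\xi_\alpha}=\sum_{\alpha'}T_{\alpha',\alpha}g_{\alpha',\alpha}\ket{\xi_{\alpha'}}$ and notes that $E_\alpha=0$ forces $g_{\alpha',\alpha}=0$ with probability $1$. Your $\sqrt{H}$ cross-check is valid, but it does not really avoid the degenerate-Gaussian convention: identifying the definition with Equation~\ref{eq:alteredGdifferent} already uses $g_{\alpha',\alpha}\sim\sqrt{E_\alpha}\,g'_{\alpha',\alpha}$, which at $E_\alpha=0$ is that same convention.
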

\begin{proof}
Consider $H_{T,g}\ket{\xi_\alpha}=G^\dagger G\ket{\xi_\alpha} = G^\dagger(\sum_{\alpha'} T_{\alpha',\alpha}g_{\alpha',\alpha}\ket{\xi_{\alpha'}})$. Since $E_\alpha=0$, we have $g_{\alpha',\alpha}=0$ with probability $1$ for all $\alpha'$. This shows that $H_{T,g}\ket{\xi_\alpha}=0$.
\end{proof}
We will consider the following choice of $T$, which mimicks the local nature of the alteration in Equation \ref{eq:alteredHloc}.     \begin{equation}
    \label{eq:Tchoiceham}
     T_{\alpha,\beta} = \begin{cases}1 & \text{if }|\text{bin}(\alpha)\oplus \text{bin}(\beta)|=1\\
     0 & \text{otherwise}\end{cases},   
    \end{equation}
where $\text{bin}$ is the binary representation of the integer in its argument. In other words, $T$ connects $\alpha,\beta$ that differ by a bit. We obtain a stronger energy-based uncertainty principle with this choice. 
\begin{theorem}
\label{theo:sparsevar}
    For any state $\psi$, $T$ according to Equation \ref{eq:Tchoiceham} and a continuity assumption that $E_{\beta'}=\Omega(1)E_{\beta}-O(1)$ whenever  $|\text{bin}(\beta)\oplus \text{bin}(\beta')|\leq 2$, we obtain 
     $$\expec_{g}(\Tr(\psi H^2_{T,g}) - \Tr(\psi H_{T,g})^2) = \Omega(1)\cdot \Tr(\psi H^2) - O(1)\cdot \Tr(\psi H) = \Omega(1)\cdot \Tr(\psi H)^2 - O(1)\cdot \Tr(\psi H).$$
\end{theorem}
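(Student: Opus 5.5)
The approach is to expand both moments of $H_{T,g}=G^\dagger G$ directly in the eigenbasis $\{\ket{\xi_\alpha}\}$ of $H$ and compute Gaussian expectations with Wick's theorem.

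Write $\psi_{\beta\beta'}=\bra{\xi_\beta}\psi\ket{\xi_{\beta'}}$. Then
\begin{equation*}
(H_{T,g})_{\beta'\beta}=\sum_\alpha T_{\alpha\beta'}T_{\alpha\beta}g_{\alpha\beta'}g_{\alpha\beta}.
\end{equation*}
We have already seen in Equation \ref{eq:expectedG} that $\expec H_{T,g}=H$.

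\textbf{Step 1: the squared mean.} Jensen's inequality gives $\expec\Tr(\psi H_{T,g})^2=\Tr(\psi H)^2+\Var_g\Tr(\psi H_{T,g})$. So the quantity to bound is
\begin{equation*}
\expec\Tr(\psi H_{T,g}^2)-\Tr(\psi H)^2-\Var_g\Tr(\psi H_{T,g}).
\end{equation*}

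\textbf{Step 2: the second moment via Wick.} Next I would compute $\expec H_{T,g}^2=\expec G^\dagger G G^\dagger G$ by pairing the four Gaussians $g_{\alpha\beta'}g_{\alpha\beta''}g_{\alpha'\beta''}g_{\alpha'\beta}$. There are three pairings.
\begin{itemize}
\item The pairing $(1,2)(3,4)$ gives $\sum_{\beta}(\cdot)$ with $\expec[(G^\dagger G)]\expec[(G^\dagger G)]$, i.e.\ it reproduces $H^2$ (diagonal). Its contribution is $\Tr(\psi H^2)\geq\Tr(\psi H)^2$, which cancels the squared mean.
\item The pairing $(1,4)(2,3)$ with $\alpha=\alpha'$ is diagonal. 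It gives $\sum_\beta \psi_{\beta\beta}\sum_{\alpha\sim\beta}\frac{E_\beta}{t_\beta}\sum_{\beta''\sim\alpha}\frac{E_{\beta''}}{t_{\beta''}}$, where $\sim$ means the indices differ in one bit and $t_\beta=n$. Here $\beta''$ ranges over strings at Hamming distance $\le 2$ from $\beta$, including $\beta$ itself. The continuity assumption $E_{\beta''}\ge\Omega(1)E_\beta-O(1)$ then gives a contribution of at least $\sum_\beta\psi_{\beta\beta}(\Omega(1)E_\beta^2-O(1)E_\beta)=\Omega(1)\Tr(\psi H^2)-O(1)\Tr(\psi H)$. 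This is the main positive term, and it is where the quadratic gain over Theorem \ref{theo:highvar} comes from.
\item The pairing $(1,3)(2,4)$ forces $\beta'=\beta''=\beta$ structure and contributes a nonnegative diagonal term, at least if I handle the off-diagonal pieces carefully.
\end{itemize}

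\textbf{Step 3: bounding $\Var_g\Tr(\psi H_{T,g})$.} Write $\Tr(\psi H_{T,g})=\sum_\alpha \bra{v_\alpha}\psi\ket{v_\alpha}$ with $v_\alpha=\sum_\beta T_{\alpha\beta}g_{\alpha\beta}\ket{\xi_\beta}$. The rows $v_\alpha$ for distinct $\alpha$ are independent. Each $\bra{v_\alpha}\psi\ket{v_\alpha}$ is a Gaussian quadratic form with covariance $C_\alpha=\mathrm{diag}(E_\beta/n)_{\beta\sim\alpha}$, so its variance is $2\Tr((C_\alpha^{1/2}\psi C_\alpha^{1/2})^2)$. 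Summing over $\alpha$ and using $\Tr((C^{1/2}\psi C^{1/2})^2)\le \Tr(C\psi)\,\|C\|$ together with continuity, I expect this variance to be $O(1/n)\cdot\Tr(\psi H^2)+O(1)\Tr(\psi H)$. The $1/n$ arises because each row spreads the weight $E_\beta$ over $n$ entries, so this is dominated by the Step 2 term. The same estimate should also control the off-diagonal Wick contributions from Step 2.

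\textbf{Step 4: the final equality.} The last form of the statement follows from $\Tr(\psi H^2)\ge\Tr(\psi H)^2$.

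\textbf{Main obstacle.} The hardest part will be the bookkeeping of the off-diagonal (coherent) terms in Steps 2 and 3, since $\psi$ need not be diagonal in the $\xi$ basis. I would first try to show that every such term is either a manifestly positive Hilbert–Schmidt norm or carries a $1/t_\beta=1/n$ suppression. Failing that, I would fall back to bounding those terms by Cauchy–Schwarz against the diagonal terms, absorbing the cost into the $O(1)\Tr(\psi H)$ slack using continuity.
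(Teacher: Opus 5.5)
Your argument is correct. It shares the paper's basic setup (expanding $H_{T,g}=G^\dagger G$ in the $\xi$ basis and using Gaussian pairings) but treats the subtracted term differently.

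Your Step 2 reproduces the paper's formula for $\expec_g\Tr(\psi H_{T,g}^2)$. In fact all three pairings force $\beta=\beta'$, so $\expec_g H_{T,g}^2$ is exactly diagonal. The off-diagonal bookkeeping you flag as the main obstacle therefore never arises there.

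The difference is in the subtracted term. The paper works with pure $\psi=\sum_\beta a_\beta\ket{\xi_\beta}$ and extends to mixed states by concavity of the variance. It computes $\expec_g\Tr(\psi G^\dagger G)^2$ in closed form and cancels it term by term against your $(1,4)(2,3)$ contribution using the weight $1-|a_{\beta''}|^2$. It then counts the $\binom{n}{2}$ strings at Hamming distance $2$.

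You instead write the subtracted term as $\Tr(\psi H)^2+\Var_g\Tr(\psi H_{T,g})$. You then bound the variance as a sum of independent row contributions and show it is a $1/n$-suppressed error. This buys three things:
\begin{itemize}
\item It makes transparent that the quadratic gain comes entirely from $H_{T,g}\ket{\xi_\beta}$ spreading onto distance-$2$ strings, while fluctuations of the mean are negligible.
\item It handles mixed states directly.
\item It is robust to pairing errors. The paper's list of pairings for $\expec_g\Tr(\psi G^\dagger G)^2$ omits $\beta=\beta'''$, $\beta'=\beta''$. Its variance is therefore short by $\sum_{\beta\neq\beta''}|a_\beta|^2|a_{\beta''}|^2\frac{E_\beta E_{\beta''}}{n^2}\sum_\alpha T_{\alpha\beta}T_{\alpha\beta''}$. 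For an equal superposition of two equal-energy strings at distance $2$, the true variance is $(n+2)E^2/n^2$ versus the paper's $(n+1)E^2/n^2$. This is harmless for the theorem precisely because, as your Step 3 shows, the missing term is $O(\Tr(\psi H^2)/n)$.
\end{itemize}
The price of your route is that absorbing $O(1/n)\Tr(\psi H^2)$ needs $n$ above a constant set by the continuity constants. That is fine for an asymptotic statement.

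Three small points to tighten.
\begin{itemize}
\item Since the $g_{\alpha\beta}$ are real, $\bra{v_\alpha}\psi\ket{v_\alpha}=v_\alpha^{T}(\mathrm{Re}\,\psi)v_\alpha$. Its variance is therefore $2\|C_\alpha^{1/2}(\mathrm{Re}\,\psi)C_\alpha^{1/2}\|_2^2\leq 2\|C_\alpha^{1/2}\psi C_\alpha^{1/2}\|_2^2$. Your formula is an upper bound rather than an equality, which is all you need.
\item Step 3 uses continuity in the reverse direction, $E_{\beta'}\leq O(1)E_\beta+O(1)$. This follows from the symmetric hypothesis, but you can also avoid it. With $X_\alpha=C_\alpha^{1/2}\psi C_\alpha^{1/2}$, Cauchy--Schwarz gives $\|X_\alpha\|_2^2\leq(\Tr X_\alpha)^2\leq\sum_{\beta\sim\alpha}\psi_{\beta\beta}E_\beta^2/n^2$. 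Summing over $\alpha$ yields $\Var_g\Tr(\psi H_{T,g})\leq 2\Tr(\psi H^2)/n$ with no assumption at all.
\item Step 1 is the variance identity, not Jensen's inequality.
\end{itemize}
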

Qualitatively, this lower bound is stronger than that in Theorem \ref{theo:highvar} for non-zero energies, as the RHS has a quadratic dependence on energy, rather than linear. The continuity conditions are reasonable for all physical systems with a local Hamiltonian. The proof is in Appendix \ref{append:sparseproof}, which is a standard but tedious moment calculation.

\begin{remark}
\label{rem:sparseH}
The following properties of the above definition can be useful.
\begin{enumerate}
\item {\bf Oracle access to $H_{T,g}$ in the classical case: }
The advantage of Equation \ref{eq:Tchoiceham} is that $H_{T,g}$ is a sparse matrix if $H$ is a classical Hamiltonian. Suppose one is able to compute $E_{\alpha}$ for a given $\alpha$ (such as for a classical $H$, which is possible even with oracle access). Then for each $\beta$, one knows the non-zero entries of the $\beta$'th row of $H_{T,f}$ and also for each $\beta',\beta$ one knows the corresponding entry \footnote{Note that this requires access to exponetially many gaussian random variables, but if the algorithm makes $C$ calls to $H_{T,g}$, it suffices to use $C$-wise independent distribution, which requires only $poly(C)$ resources.}.  Thus, one can run Hamiltonian simulation on $H_{T,g}$ and perform energy measurements. 
\item \noindent {\bf Zero ground energy:} We can assume without loss of generality that $E_{min}:=\min\{E_{\alpha}, \alpha\in \{1,2,\ldots 2^n\}\}$ is $0$. This is because we can run the algorithm for values of $E^*$ (guess for $E_{min}$) starting from $0$ to the maximum energy, replacing each $E_{\alpha}$ in the oracle calls with $E_{\alpha}- E^*$. We move on to the next $E^*$ if we ever see a negative $E_{\alpha}- E^*$. And we keep the smallest possible energy observed, corrected by $+E^*$.
\item \textbf{Controlling the largest eigenvalue:} Suppose we know that the ground energy of $H$ is less than some number $B$, for example by preparing a low energy state and measuring its energy. Then in the oracle call, we can truncate all the eigenvectors of $H$ with eigenvalues more than $B$ to $0$, effectively decoupling that subspace from the rest of the algorithm. This can help prevent escape to the high energy regime.  
\item \textbf{Easy sparse family:} It is known \cite{CDBBT24} that some families of sparse Hamiltonians admit fast quantum algorithm, as the maximally mixed state has a good overlap with the low energy subspace (due to the semi-circle law). However, we do not expect the semi-circle to exist for the choice in Equation \ref{eq:Tchoiceham}, as the randomness is significantly reduced at the lower end of the spectrum. 
\end{enumerate}
\end{remark}

\subsection{Proof of variance lower bound for sparse altered Hamiltonians}
\label{append:sparseproof}

\begin{proof}[Proof of Theorem \ref{theo:sparsevar}]
We will prove the statement for pure state $\psi$; the statement for mixed $\psi$ follows by convexity of square function. Let $\ket{\psi}=\sum_\beta a_\beta\ket{\xi_\beta}$. Consider,
\begin{align}
\label{eq:Tgvarlowerbound}
\expec_g\left(\Tr(\psi H^2_{T,g}) - \Tr(\psi H_{T,g})^2\right) = \expec_g\left(\Tr(\psi (G^\dagger G)^2) - \Tr(\psi G^\dagger G)^2\right).
\end{align}
We evaluate each expectation separately. Consider
\begin{align*}
\expec_g\Tr(\psi G^\dagger G)^2 &=  \left(\sum_{\alpha,\beta, \beta'} T_{\alpha, \beta}T_{\alpha, \beta'}a_{\beta}a^*_{\beta'}g_{\alpha,\beta}g_{\alpha,\beta'}\right)^2
=\sum_{\alpha}\expec_g\left(\sum_{\beta, \beta'} T_{\alpha, \beta}T_{\alpha, \beta'}a_{\beta}a^*_{\beta'}g_{\alpha,\beta}g_{\alpha,\beta'}\right)^2\\
&+ \sum_{\alpha\neq \alpha'}\left(\expec_g\sum_{\beta, \beta'} T_{\alpha, \beta}T_{\alpha, \beta'}a_{\beta}a^*_{\beta'}g_{\alpha,\beta}g_{\alpha,\beta'}\right)\cdot \left(\expec_g\sum_{\beta'', \beta'''} T_{\alpha', \beta''}T_{\alpha', \beta''}a_{\beta''}a^*_{\beta'''}g_{\alpha',\beta''}g_{\alpha',\beta'''}\right)\\
&= \sum_{\alpha}\expec_g\left(\sum_{\beta, \beta', \beta'', \beta'''} T_{\alpha, \beta}T_{\alpha, \beta'}T_{\alpha, \beta''}T_{\alpha, \beta'''}a_{\beta}a^*_{\beta'}a_{\beta''}a^*_{\beta'''}g_{\alpha,\beta}g_{\alpha,\beta'}g_{\alpha,\beta''}g_{\alpha,\beta'''}\right)\\
&+ \sum_{\alpha\neq \alpha'}\left(\sum_{\beta} T_{\alpha, \beta}|a_{\beta}|^2\frac{E_\beta}{t_\beta}\right)\cdot \left(\sum_{\beta} T_{\alpha', \beta}|a_{\beta}|^2\frac{E_\beta}{t_\beta}\right),
\end{align*}
where in the first equality we used the independence of expectations\footnote{Thanks to ChatGPT for this observation.} when $\alpha\neq \alpha'$.
Note that
\begin{align*}
&\sum_{\alpha\neq \alpha'}\left(\sum_{\beta} T_{\alpha, \beta}|a_{\beta}|^2\frac{E_\beta}{t_\beta}\right)\cdot \left(\sum_{\beta} T_{\alpha', \beta}|a_{\beta}|^2\frac{E_\beta}{t_\beta}\right)\\
&= \left(\sum_\alpha\sum_{\beta} T_{\alpha, \beta}|a_{\beta}|^2\frac{E_\beta}{t_\beta}\right)\cdot \left(\sum_{\alpha'}\sum_{\beta} T_{\alpha', \beta}|a_{\beta}|^2\frac{E_\beta}{t_\beta}\right) - \sum_\alpha\left(\sum_{\beta} T_{\alpha, \beta}|a_{\beta}|^2\frac{E_\beta}{t_\beta}\right)^2\\
&= \left(\sum_{\beta} |a_{\beta}|^2E_\beta\right)^2 - \sum_\alpha\left(\sum_{\beta} T_{\alpha, \beta}|a_{\beta}|^2\frac{E_\beta}{t_\beta}\right)^2
\end{align*}
We can evaluate the remaining expectation by three pairings: $\beta=\beta'\neq \beta''=\beta'''$, $\beta=\beta''\neq \beta'=\beta'''$, $\beta=\beta'=\beta''=\beta'''$, and obtain
\begin{align*}
&\expec_g\left(\sum_{\beta, \beta', \beta'', \beta'''} T_{\alpha, \beta}T_{\alpha, \beta'}T_{\alpha, \beta''}T_{\alpha, \beta'''}a_{\beta}a^*_{\beta'}a_{\beta''}a^*_{\beta'''}g_{\alpha,\beta}g_{\alpha,\beta'}g_{\alpha,\beta''}g_{\alpha,\beta'''}\right)\\
&=\sum_{\beta\neq \beta''} T_{\alpha, \beta}T_{\alpha, \beta''}|a_{\beta}|^2|a_{\beta''}|^2\frac{E_\beta E_{\beta''}}{t_\beta t_{\beta''}} + \sum_{\beta\neq \beta'} T_{\alpha, \beta}T_{\alpha, \beta'}a^2_{\beta}(a^*_{\beta'})^2\frac{E_\beta E_{\beta'}}{t_\beta t_{\beta'}}+ 3\sum_\beta T_{\alpha,\beta} |a_\beta|^4\frac{E_\beta^2}{t^2_\beta}\\
&=\sum_{\beta,\beta''} T_{\alpha, \beta}T_{\alpha, \beta''}|a_{\beta}|^2|a_{\beta''}|^2\frac{E_\beta E_{\beta''}}{t_\beta t_{\beta''}} + \sum_{\beta, \beta'} T_{\alpha, \beta}T_{\alpha, \beta'}a^2_{\beta}(a^*_{\beta'})^2\frac{E_\beta E_{\beta'}}{t_\beta t_{\beta'}}+ \sum_\beta T_{\alpha,\beta} |a_\beta|^4\frac{E_\beta^2}{t^2_\beta}\\
&=\left(\sum_{\beta} T_{\alpha, \beta}|a_{\beta}|^2\frac{E_\beta }{t_\beta}\right)^2 + \left|\sum_{\beta} T_{\alpha, \beta}a^2_{\beta}\frac{E_\beta}{t_\beta}\right|^2+ \sum_\beta T_{\alpha,\beta} |a_\beta|^4\frac{E_\beta^2}{t^2_\beta},
\end{align*}
where in the first equality, we used $\expec(g^4_{\alpha,\beta}) = 3\frac{E_\beta^2}{t_\beta^2}$. This means that
\begin{align}
\label{eq:Gexpecsquare}
\expec_g\Tr(\psi (G^\dagger G))^2&= \sum_{\alpha}\left(\left(\sum_{\beta} T_{\alpha, \beta}|a_{\beta}|^2\frac{E_\beta }{t_\beta}\right)^2 + \left|\sum_{\beta} T_{\alpha, \beta}a^2_{\beta}\frac{E_\beta}{t_\beta}\right|^2+ \sum_\beta T_{\alpha,\beta} |a_\beta|^4\frac{E_\beta^2}{t^2_\beta}\right)\nonumber\\
&+\left(\sum_{\beta} |a_{\beta}|^2E_\beta\right)^2 - \sum_\alpha\left(\sum_{\beta} T_{\alpha, \beta}|a_{\beta}|^2\frac{E_\beta}{t_\beta}\right)^2\nonumber\\
&= \Tr(\psi H)^2 + \sum_{\beta} |a_{\beta}|^4\frac{E^2_\beta}{t_\beta} + \sum_\alpha\left|\sum_{\beta} T_{\alpha, \beta}a^2_{\beta}\frac{E_\beta}{t_\beta}\right|^2
\end{align}
Similarly,
\begin{align*}
\expec_g\Tr(\psi (G^\dagger G)^2)&= \expec_g\left(\sum_{\beta', \beta, \beta''}a^*_{\beta'}a_\beta\bra{E_{\beta'}}G^\dagger G\ketbra{E_{\beta''}}G^\dagger G\ket{E_\beta}\right)\\
&=\expec_g\left(\sum_{\beta', \beta, \beta'',\alpha,\alpha'}a^*_{\beta'}a_\beta T_{\alpha, \beta'}g_{\alpha,\beta'}T_{\alpha, \beta''}g_{\alpha,\beta''}T_{\alpha', \beta''}g_{\alpha',\beta''}T_{\alpha', \beta}g_{\alpha',\beta}\right).
\end{align*}
Lets consider two cases: $\alpha=\alpha'$ and $\alpha\neq \alpha'$. In the first case, which is similar to earlier calculation, the expectation is non-zero iff $\beta=\beta'$ (since them being different gives two distinct odd degree terms). In the second case, the only non-zero contribution comes when $\beta=\beta''=\beta'$. Collectively, we find 
\begin{align*}
& \expec_g\left(\sum_{\beta', \beta, \beta'',\alpha,\alpha'}a^*_{\beta'}a_\beta T_{\alpha, \beta'}g_{\alpha,\beta'}T_{\alpha, \beta''}g_{\alpha,\beta''}T_{\alpha', \beta''}g_{\alpha',\beta''}T_{\alpha', \beta}g_{\alpha',\beta}\right)\\
& =\expec_g\left(\sum_{\beta, \beta'',\alpha}|a_\beta|^2 T_{\alpha, \beta}T_{\alpha, \beta''}g^2_{\alpha,\beta}g^2_{\alpha,\beta''}\right)+ \expec_g\left(\sum_{\beta,\alpha\neq \alpha'}|a_{\beta}|^2T_{\alpha, \beta}g^2_{\alpha,\beta}T_{\alpha', \beta}g^2_{\alpha',\beta}\right)\\
&= \sum_{\alpha,\beta\neq\beta''} |a_\beta|^2 T_{\alpha,\beta}T_{\alpha,\beta''}\frac{E_\beta E_{\beta''}}{t_\beta t_{\beta''}} +3\sum_{\beta,\alpha} |a_\beta|^2T_{\alpha,\beta}\frac{E^2_\beta}{t^2_\beta} + \sum_{\beta,\alpha\neq \alpha'}|a_\beta|^2T_{\alpha,\beta}T_{\alpha',\beta}\frac{E^2_\beta}{t^2_\beta}\\
&= \sum_{\alpha,\beta,\beta''} |a_\beta|^2 T_{\alpha,\beta}T_{\alpha,\beta''}\frac{E_\beta E_{\beta''}}{t_\beta t_{\beta''}} +\sum_{\beta,\alpha} |a_\beta|^2T_{\alpha,\beta}\frac{E^2_\beta}{t^2_\beta} + \sum_{\beta,\alpha, \alpha'}|a_\beta|^2T_{\alpha,\beta}T_{\alpha',\beta}\frac{E^2_\beta}{t^2_\beta}\\
&= \sum_{\alpha,\beta,\beta''} |a_\beta|^2 T_{\alpha,\beta}T_{\alpha,\beta''}\frac{E_\beta E_{\beta''}}{t_\beta t_{\beta''}} +\sum_\beta |a_\beta|^2\frac{E^2_\beta}{t_\beta} + \sum_{\beta}|a_\beta|^2E^2_\beta.
\end{align*}
From here, we conclude that
\begin{align}
\label{eq:Gsquareexpec}
\expec_g\Tr(\psi(G^\dagger G)^2) &= \Tr(\psi H^2) + \sum_{\beta}|a_\beta|^2\frac{E_\beta^2}{t_\beta} + \sum_{\beta, \beta''}|a_\beta|^2\frac{E_\beta E_{\beta''}}{t_\beta t_{\beta''}}\sum_\alpha T_{\alpha, \beta}T_{\alpha, \beta''}.
\end{align}
Thus, combining Equations \ref{eq:Gsquareexpec} and \ref{eq:Gexpecsquare}, we get 
\begin{align*}
\expec_g\left(\Tr(\psi (G^\dagger G)^2) - \Tr(\psi G^\dagger G)^2\right) &= \left(\Tr(\psi H^2) - \Tr(\psi H)^2\right) + \sum_{\beta}|a_\beta|^2\frac{E_\beta^2}{t_\beta} + \sum_{\beta, \beta''}|a_\beta|^2\frac{E_\beta E_{\beta''}}{t_\beta t_{\beta''}}\sum_\alpha T_{\alpha, \beta}T_{\alpha, \beta''}\\
& - \sum_\beta|a_{\beta}|^4\frac{E^2_\beta}{t_\beta} - \sum_\alpha\left|\sum_{\beta} T_{\alpha, \beta}a^2_{\beta}\frac{E_\beta}{t_\beta}\right|^2\\
&= \left(\Tr(\psi H^2) - \Tr(\psi H)^2\right) + \sum_{\beta}|a_\beta|^2(1-|a_\beta|^2)\frac{E_\beta^2}{t_\beta} \\ 
&+ \sum_{\beta, \beta''}|a_\beta|^2\frac{E_\beta E_{\beta''}}{t_\beta t_{\beta''}}\sum_\alpha T_{\alpha, \beta}T_{\alpha, \beta''} - \sum_{\alpha}\left|\sum_{\beta} T_{\alpha, \beta}a_{\beta}^2\frac{E_\beta}{t_\beta}\right|^2 .
\end{align*}
Plugging back in Equation \ref{eq:Tgvarlowerbound} and dropping the term $\sum_{\beta}|a_\beta|^2(1-|a_\beta|^2)\frac{E_\beta^2}{t_\beta}$,  we get the lower bound
\begin{align*}
&\expec_g\left(\Tr(\psi H_{T,g}^2) - \Tr(\psi H_{T,g})^2\right) \geq \sum_{\beta, \beta''}\left(|a_\beta|^2- \frac{a_\beta^2 (a^*_{\beta''})^2+(a^*_\beta)^2 (a_{\beta''})^2}{2}\right)\frac{E_\beta E_{\beta''}}{t_\beta t_{\beta''}}\sum_\alpha T_{\alpha, \beta}T_{\alpha, \beta''}\\
&= \sum_{\beta, \beta''}(|a_\beta|^2- Re(a_\beta^2 (a^*_{\beta''})^2))\frac{E_\beta E_{\beta''}}{t_\beta t_{\beta''}}\sum_\alpha T_{\alpha, \beta}T_{\alpha, \beta''}\geq \sum_{\beta, \beta''}(|a_\beta|^2- |a_\beta|^2 |a^*_{\beta''}|^2))\frac{E_\beta E_{\beta''}}{t_\beta t_{\beta''}}\sum_\alpha T_{\alpha, \beta}T_{\alpha, \beta''},
\end{align*}
since $\Tr(\psi H^2) - \Tr(\psi H)^2\geq 0$ and $\sum_{\beta}|a_\beta|^2(1-|a_\beta|^2)\frac{E_\beta^2}{t_\beta}\geq 0$. 

Consider the choice of $T$ in Equation \ref{eq:Tchoiceham}. We have $\sum_{\alpha}T_{\alpha,\beta}T_{\alpha,\beta''} = 2$ wherever $|\text{bin}(\beta)\oplus \text{bin}(\beta'')|=2$. Lets restrict to all such $\beta, \beta'$ and use the continuity of $E_\beta$ to set $E_{\beta''}=\Omega(1)E_\beta - O(1)$. Also note that $t_\beta=n$ for all $\beta$. This simplifies
\begin{align*}
\expec_f\left(\Tr(\psi H_{T,f}^2) - \Tr(\psi H_{T,f})^2\right) &\geq  \sum_{\beta, \beta''}|a_\beta|^2(1-|a_{\beta''}|^2)\frac{E_\beta E_{\beta''}}{t_\beta t_{\beta''}}\sum_\alpha T_{\alpha, \beta}T_{\alpha, \beta''}\\
&\geq \sum_{\beta, \beta'':|\text{bin}(\beta)\oplus \text{bin}(\beta'')|=2}|a_\beta|^2(1-|a_{\beta''}|^2)\frac{\Omega(1)\cdot E^2_\beta -O(E_{\beta})}{n^2}\\
& = \sum_{\beta}|a_\beta|^2\left(\sum_{ \beta''::|\text{bin}(\beta)\oplus \text{bin}(\beta'')|=2}(1-|a_{\beta''}|^2)\right)\frac{\Omega(1)\cdot E^2_\beta -O(E_{\beta})}{n^2}\\
& \geq \sum_{\beta}|a_\beta|^2\left({n \choose 2}-1\right)\frac{\Omega(1)\cdot E^2_\beta -O(E_{\beta})}{n^2} \\
&= \Omega(1)\sum_{\beta}|a_\beta|^2E^2_\beta - O(1)\sum_{\beta}|a_\beta|^2E_\beta.
\end{align*}

This completes the proof.
\end{proof}

\end{document}